\let\csname equation*\endcsname\relax
\let\csname endequation*\endcsname\relax
\newtheorem{thm}{Theorem}[section]
\newtheorem{prop}[thm]{Proposition}
\newtheorem{lem}[thm]{Lemma}
\theoremstyle{definition}
\theoremstyle{remark}
  \newcommand{\oo}{\infty}
  \newcommand{\del}{\partial}
  \newcommand{\sse}{\subseteq}
  \newcommand{\sso}{\subset}
  \newcommand{\sm}{\setminus}
  \newcommand{\C}{\mathscr{C}}
\renewcommand{\d}{\mathrm{d}}
  \newcommand{\Diff}{\mathrm{Diff}}
  \newcommand{\eps}{\varepsilon}
  \newcommand{\G}{\mathscr{G}}
  \newcommand{\h}{\mathsf{h}}
  \newcommand{\Lie}{\mathcal{L}}
  \newcommand{\M}{\mathcal{M}}
  \newcommand{\rM}{\mathring{\mathcal{M}}}
  \newcommand{\cP}{\mathscr{P}}
  \newcommand{\bP}{\bar{\mathscr{P}}}
  \newcommand{\Secs}{\Gamma}
  \newcommand{\supp}{\operatorname{supp}}
  \newcommand{\U}{\mathscr{U}}
\begin{document}

\title{Local and gauge invariant observables in gravity}

\author{Igor Khavkine}
\address{Department of Mathematics, University of Trento,
and TIFPA-INFN, Trento,
I--38123 Povo (TN) Italy}
\ead{igor.khavkine@unitn.it}

\begin{abstract}
It is well known that general relativity (GR) does not possess any
non-trivial local (in a precise standard sense) and diffeomorphism
invariant observables. We propose a generalized notion of local
observables, which retain the most important properties that follow from
the standard definition of locality, yet is flexible enough to admit a
large class of diffeomorphism invariant observables in GR. The
generalization comes at a small price, that the domain of definition of
a generalized local observable may not cover the entire phase space of
GR and two such observables may have distinct domains. However, the
subset of metrics on which generalized local observables can be defined
is in a sense generic (its open interior is non-empty in the Whitney
strong topology).  Moreover, generalized local gauge invariant
observables are sufficient to separate diffeomorphism orbits on this
admissible subset of the phase space.  Connecting the construction with
the notion of differential invariants, gives a general scheme for
defining generalized local gauge invariant observables in arbitrary
gauge theories, which happens to agree with well-known results for
Maxwell and Yang-Mills theories.
\end{abstract}
\pacs{%
	04.20.-q, 
	04.20.Cv, 
	03.50.-z, 
	04.62.+v  
}
\maketitle

\section{Introduction}\label{sec:intro}
The goal of this note is to outline a connection between the theory of
differential invariants and local observables in gauge theories, in the
sense of classical and quantum field theory. The main example we will
treat is gravity, or more precisely general relativity (GR) possibly
coupled to matter fields, which is a gauge theory with diffeomorphisms as
the group of gauge transformations. The differential invariants in this
case are essentially scalars that can be tensorially constructed from
the Riemann curvature tensor and its covariant derivatives. The core
idea of the connection to local observables appeared already in the
proposal of Bergmann and Komar~\cite{bg0,bg1}. However, it seems, that
the idea has never been taken to the logical conclusion that we intend
to sketch below.

Consider the theory of a, say scalar, field $\phi$ on an $n$-dimensional
spacetime manifold $M$. The prototypical example of a local observable
in this theory is a smeared field
\begin{equation}
	\phi(f) = \int_M \phi(x) f(x) ,
\end{equation}
where the smearing test function $f\in \Omega^n(M)$ is $C^\oo$ with
compact support. Those last two properties are key to making $\phi(f)$ a
useful observable. Classically, an observable $F\colon \Phi \mapsto
F(\Phi)$ is a map from field configurations to real numbers. A smeared
field acts as $\phi(f)\colon \Phi \mapsto \int_M \Phi(x) f(x)$. The
compactness of the support of $f$ makes sure that this integral
converges for an arbitrary field configuration, so that $\phi(f)$ has a
large domain of definition on the phase space of the theory (on all of
it, in this case). The smoothness of $f$ makes sure that the Poisson
bracket
\begin{equation}\label{eq:loc2pois}
	\{\phi(f),\phi(g)\} = \int_{M\times M} f(x) E(x,y) g(y) ,
\end{equation} where
$E(x,y)$ is the distributional kernel of the Peierls formula and
$\phi(g)$ is a similar smeared field, is well defined as a
distributional integral. Compact support also helps with the convergence of
the Poisson bracket integral. Quantum mechanically, the field $\phi(x)$
is promoted to an operator valued distribution. The smoothness of the
smearing function $f$ is then essential to get an honest (though
unbounded) operator corresponding to $\phi(f)$. The expectation values
of products of smeared fields like
\begin{equation}
	\langle \phi(f) \phi(g) \rangle
	= \int_{M\times M} \langle \phi(x) \phi(y) \rangle f(x) g(y) ,
\end{equation}
are also distributional integrals with respect to the $2$-point singular
kernel $\langle \phi(x) \phi(y) \rangle$. Thus, the smoothness of $f$
and $g$ are again necessary to make sure that this integral is locally
well-defined (UV finite), with their compact support ensuring its global
convergence (IR finiteness). In short, we say that the smoothness of
test functions, like $f$, \emph{diffuses the UV singularities} of local
fields, like $\phi(x)$, and their compact support \emph{IR regularizes
them}.

An immediate generalization is the notion of a \emph{multilocal
observable}, which is given by a formula of the form
\begin{equation}
	\int_{M^m} \phi(x_1) \cdots \phi(x_m) f(x_1,\ldots,x_m) ,
\end{equation}
where the smearing test function $f\in \Omega^{mn}(M^m)$ is $C^\oo$ with
compact support. It should be noted that the Poisson bracket of two
local observables, as defined by Equation~\eqref{eq:loc2pois}, is in
general no longer a local observable. Rather, as in the example of
$\phi^2(f) = \int_M \phi^2(x)f(x)$, it is (almost) bilocal
(multilocal with $l=2$),
\begin{equation}
	\{\phi^2(f),\phi^2(g)\} =
	\int_{M\times M} 2\phi(x)f(x) \, E(x,y) \, 2\phi(y)g(y) ,
\end{equation}
with the caveat that the smearing function $f(x)E(x,y)g(y)$ is a
distribution and could be non-smooth. Thus, another natural generalization
that invites itself is that of multilocal observables with
distributional smearing, though the identification of the class of
distributions that can be consistently allowed becomes rather technical.
We mention these generalizations only for completeness, with the
remainder of this note concentrating on local observables with smooth
smearings. Though, we do briefly come back to multilocal observables in
Sections~\ref{sec:pois} and~\ref{sec:discuss}.

In the case when $\phi(x)$ is a local field in a gauge theory, another
important property demanded of a local observable like $\phi(f)$ is
\emph{gauge invariance}. That is, the value of $\phi(f)$ (numerical
value classically, and operatorial value quantum mechanically) stays
invariant under the action of gauge transformations. Any physically
meaningful quantity may only be represented by a gauge invariant
observable. It is common knowledge that, in gravitational theories, the
set of local gauge invariant observables is trivial (see for
instance~\cite{gmh} or \cite{bfr-qg}, for a clear discussion). Such a
statement can of course be made once a suitably precise notion of
locality and gauge invariance are given, as we do in
Section~\ref{sec:locobsv}. On the other hand, a slight relaxation of
that standard notion of locality, which we propose in
Section~\ref{sec:glocobsv}, opens the door to the introduction in
Section~\ref{sec:diffinv} of a large class of gravitational observables
that are gauge invariant (thanks to the use of differential invariants),
diffuse UV singularities and are IR regularizing. Finally, we address
the computation of Poisson brackets between generalized local gauge
invariant gravitational observables in Section~\ref{sec:pois}.
Ultimately, we propose to treat this generalized notion as the true
definition of local observables.

In the rest of the note we discuss only classical observables. Comments
on how the constructions outlined below impact perturbative quantum
field theory are left for the Discussion in Section~\ref{sec:discuss},
where we also mention various limitations and open problems of our
proposal.

We finish this section with a brief historical remark. The idea of
constructing observables in gravitational theories based on differential
invariants (curvature scalars) first appeared clearly in the works of
Bergmann and Komar~\cite{bg0,bg1}. Unfortunately, they never published
a computation of Poisson brackets for such observables. Such
computations appeared first in the work of DeWitt~\cite{dewitt}, who
used the Peierls bracket formalism. Since then, related ideas have appeared
sporadically in the literature, more recently referred to as relational
observables~\cite{tamb}. Some ideas in spirit similar to those presented
below can also be found in~\cite{gmh} and~\cite{bfr-qg}, with the latter
following-up a slightly different line of ideas that attempted to expand
the notion of local obsrvables by modifying the notion of gauge
invariance~\cite{rejzner-thesis,fr}.

\section{Standard local observables in field theories}\label{sec:locobsv}

Let us briefly set up the geometric formalism of classical field theory.
We will mostly follow the references~\cite{kh-caus,kh-peierls},
with~\cite{bf-notes,fr,hs,bfr} being complementary sources. We take $M$
to be an oriented $n$-dimensional smooth manifold.  Usually one endows
$M$ with a Lorentzian metric, but we are working at a level of
generality where that is not necessary. Take a vector bundle $F\to M$,
the \emph{field bundle}, and denote its sections as $\Phi \colon M\to
F$, a \emph{field configuration}. In more generality, $F\to M$ could be
a more general smooth bundle, but we will stick to the vector bundle
case for simplicity.

By $\pi^k\colon J^k F\to M$, for $k=0,1,\ldots,\oo$, we denote the
bundle of $k$-jets of the field bundle $F\to M$. Jets%
	\footnote{Jets are a standard constructions in differential geometry.
	An introduction to jets, operations on them and their applications to
	differential equations can be found in~\cite{olver}. See also the
	relevant appendices to~\cite{kh-caus,kh-peierls}.} %
naturally and geometrically capture information about higher derivatives
of sections of $F\to M$ over a point of $M$. Given a $k$-jet, throwing
away all the information about order-$k$ derivatives gives a
$(k-1)$-jet. In other words, we have natural bundle projections
$\pi^k_{k-1} \colon J^k F \to J^{k-1}F$ over $M$, until we get $J^0F =
F$. Any section $\Phi\colon M\to F$ can be naturally augmented with the
information about its derivatives (its jet) at every point of $M$, thus
defining the \emph{$k$-jet extension} section $j^k \Phi \colon M \to J^k
F$. To be more concrete, consider a fiber-adapted local coordinate
system $(x^i,\phi^a)$ on $F$. It induces an adapted local coordinate
system $(x^i,\phi^a_I)$ on $J^kF$ over that on $F$, where
$I=\varnothing,i,ij,\ldots$ ranges all possible multi-indices. The
coordinate system is adapted in the sense that the following identity
holds for any field section $\Phi$:
\begin{equation}
	\phi^a_{i_1\cdots i_l}(j^k\Phi(x))
	= \del_{i_1}\cdots \del_{i_l} \phi^a(\Phi(x)) .
\end{equation}

Next, we introduce the \emph{field configuration space} $\C = \Secs(F)$,
consisting of smooth sections of the vector bundle $F\to M$. It is an
infinite dimensional vector space. It is convenient to endow it with the
Whitney weak topology, which gives it the structure of a Fr\'echet
space~\cite{hirsch,km}. Unfortunately the Whitney weak topology is too
coarse for some of our purposes (its fundamental neighborhoods do not
control the behavior of sections toward the open ends of non-compact
manifolds), so we will mostly make use of the Whitney strong topology
(see the discussion in Section~\ref{sec:glocobsv}). Further, the
equations of motion of the field theory (e.g.,\ Klein-Gordon equation
for a scalar field, or Einstein's equations for the gravitational field)
select the subspace of solutions, $\cP \sso \C$, which we refer to as
the \emph{(covariant) phase space}. For non-linear equations, $\cP$ is
in general not a linear subspace of $\C$, however we will presume that
$\cP$ has a well-defined Fr\'echet manifold structure induced by its
inclusion as a submanifold of the Fr\'echet space $\C$. We are
ultimately interested in the \emph{algebra of observables} $C^\oo(\cP)$.
However, it is often more convenient to discuss elements of $C^\oo(\cP)$
as images of elements of $C^\oo(\C)$ under the projection induced by the
inclusion $\cP\sso \C$.  We make the simplifying assumption that this
inclusion is sufficiently regular for the projection to be surjective.
Then, strictly speaking, observables correspond to equivalence classes
of elements of $C^\oo(\C)$. However, we will not need to make use of
this distinction below and may also refer to elements of $C^\oo(\C)$ as
\emph{observables}, or alternatively as \emph{functionals}.

On $\C$, we can define a special class of functions called \emph{local
functionals (or observables)} with the help of \emph{horizontal forms}
on $J^kF$.  Horizontal forms, whose space we denote as
$\Omega^{p,0}(F,k) \sso \Omega^p(J^kF)$, are generated as linear
combinations from the pullback $(\pi^k)^* \Omega^p(M)$ of forms on the
spacetime with coefficients from $C^\oo(J^kF)$, meaning they are of the
form $\alpha_{i_1\cdots i_k}(x^i,\phi^a_I)\, \d{x}^{i_1} \cdots
\d{x}^{i_k}$. Of course, elements of $\Omega^{p,0}(F,k)$ can be pulled
back to $\Omega^{p,0}(F,l)$ along the natural jet projections $J^lF \to
J^kF$ for any $l>k$. It is convenient to take the \emph{increasing
union} (or \emph{direct limit}) $\Omega^{p,0}(F) = \bigoplus_{k=0}^\oo
\Omega^{p,0}(F,k)/{\sim}$, where the equivalence relation identifies a
form in $\Omega^{p,0}(F,k)$ with its pullback to any higher jet bundle,
so that we do not need to worry about the order $k$ when it is not
necessary. We call elements of $\Omega^{n,0}(F)$ \emph{horizontal
densities}. For any form $\alpha \in \Omega^p(J^kF)$, we define its
\emph{spacetime support} as the closure of the projection of its support
onto $M$, $\supp_M \alpha = \overline{\pi^k \supp \alpha}$.

It is helpful to note that any form $\beta \in \Omega^p(J^kF)$ can be
projected to a horizontal form $\h[\beta] = \alpha \in
\Omega^{p,0}(F,k+1)$, where the map acts as $\h[\d x^i] = \d x^i$ and
$\h[\d\phi^a_I] = \phi^a_{Ii} \d{x}^i$ on coordinate forms, extends
linearly and respects the wedge product. Another convenient operator to
define is the Euler-Lagrange derivative $\delta_{EL}$ of a horizontal
density $\alpha\in \Omega^{n,0}(F)$. Locally, we define
$\delta_{EL}[\alpha]$ by the following identity on $M$:
\begin{equation}\label{eq:EL-def}
	\left.\frac{\d}{\d t}\right|_{t=0} (j^k(\Phi + t\Psi))^* \alpha(x)
	= (j^k\Phi)^*\delta_{EL}[\alpha]_a(x) \Psi^a(x) + \d\xi[\Phi;\Psi] ,
\end{equation}
where each $\delta_{EL}[\alpha]_a \in \Omega^{n,0}(F)$ and $\xi$ is some
differential operator that depends linearly on its second argument.
Globally, $\delta_{EL}[\alpha]$ is a horizontal density valued in the
dual bundle $F^*\to M$. By the usual methods of variational calculus,
this relation makes $\delta_{EL}[\alpha]$ unique and well-defined. All
of these constructions, and more, naturally live in the context of the
\emph{variational bicomplex}~\cite{olver}, of which we shall not need to
make further use in this note.

To any horizontal density $\alpha \in \Omega^{n,0}(F)$ with compact
spacetime support, we can associate a functional
\begin{equation}\label{eq:locdef}
	A[\Phi] = \int_M (j^k\Phi)^* \alpha .
\end{equation}
If, in local adapted coordinates, we have $\alpha = \tilde{\alpha}(x^i,
\phi^a, \phi^a_i, \phi^a_{ij}, \ldots)\, \d^n x$, then
\begin{equation}
	A[\Phi] = \int_M \tilde{\alpha}(x^i,\phi^a(\Phi(x)), \del_i \phi^a(\Phi(x)),
		\del_i\del_j \phi^a(\Phi(x)), \ldots)\, \d^n x .
\end{equation}
It is straightforward to verify that, by the compact spacetime support
condition, the above integral converges for an arbitrary field
configuration $\Phi \in \C$ and in fact $A\in C(\C)$. Of course, we
would like $A$ to be not only continuous, but also in some sense smooth
on the infinite dimensional manifold $\C$. It is in fact possible to
make use of an infinite dimensional calculus on Fr\'echet manifolds such
that $A \in C^\oo(\C)$~\cite{km,fr,bfr}. We will not enter into such
details, and simply declare functions like $A$ to be in $C^\oo(\C)$. The
class of functions on $\C$ defined by an equation like~\eqref{eq:locdef}
will be referred to as \emph{local functionals}.

On the other hand, given an element $A\in C^\oo(\C)$, we can define a
notion of spacetime support that can be attributed directly to $A$. If
$A$ is local and comes from a horizontal density $\alpha$, there will of
course be a relation between these two notions of support. More
precisely, we define~\cite[Eq.5.22]{bf-notes}
\begin{multline}\label{eq:supp}
	\supp A = \{
		x \in M \mid \forall \text{ open } U \ni x ~
		\exists \Phi,\Psi \in \C \colon \\
			\supp \Psi \sse U \text{ and } A[\Phi + \Psi] \ne A[\Phi] \} ,
\end{multline}
which is always closed. In words, for any point $y\in M$ outside $\supp
A$, there is a sufficiently small neighborhood $V\ni y$ so that any
perturbation $\Psi$ of the argument of $A[\Phi]$ with $\supp \Psi \sse
V$ must leave the numerical value of $A$ unchanged, that is, $A[\Phi +
\Psi] = A[\Phi]$.  In other words, $A[\Phi]$ does not depend on the
value of $\Phi$ in some neighborhood of $y$.

As mentioned above, we can give a precise relation between the spacetime
support of a horizontal density and that of the corresponding local
functional. Recall the Euler-Lagrange derivative $\delta_{EL}[\alpha]$
of a horizontal density $\alpha$ defined by Equation~\eqref{eq:EL-def}.
Since $\delta_{EL}[\alpha]$ is not strictly speaking a horizontal
density, we extend to it the notion of spacetime support so that
$\supp_M \delta_{EL}[\alpha]$ is the union of the spacetime supports
$\supp_M \delta_{EL}[\alpha]_a$ of its components.
\begin{lem}\label{lem:supp}
Let $\alpha \in \Omega^{n,0}(F)$ be a horizontal density with compact
spacetime support and $A[\Phi] = \int_M (j^k\Phi)^* \alpha$. Then
\begin{equation}
	\supp_M \delta_{EL}[\alpha] \sse \supp A \sse \supp_M \alpha .
\end{equation}
\end{lem}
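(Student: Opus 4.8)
The plan is to prove the two asserted inclusions separately, in each case by contraposition.

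\emph{The inclusion $\supp A \sse \supp_M \alpha$.} Suppose $x \notin \supp_M \alpha = \overline{\pi^k \supp \alpha}$. Then there is an open $U \ni x$ disjoint from $\pi^k \supp \alpha$, so that $\alpha$ vanishes on the open set $(\pi^k)^{-1}(U) \sse J^kF$. Take any $\Phi,\Psi \in \C$ with $\supp \Psi \sse U$. Off $\supp \Psi$ the jet extensions $j^k(\Phi+\Psi)$ and $j^k\Phi$ coincide, so $(j^k(\Phi+\Psi))^*\alpha = (j^k\Phi)^*\alpha$ there; while at a point $y \in \supp\Psi \sse U$ both pullbacks evaluate $\alpha$ at a jet lying over $U$ and hence vanish. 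Therefore the $n$-form $(j^k(\Phi+\Psi))^*\alpha - (j^k\Phi)^*\alpha$ is identically zero on $M$, so $A[\Phi+\Psi] = A[\Phi]$, and by the definition~\eqref{eq:supp} of $\supp A$ this shows $x \notin \supp A$.

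\emph{The inclusion $\supp_M \delta_{EL}[\alpha] \sse \supp A$.} Suppose $x \notin \supp A$, so by~\eqref{eq:supp} there is an open $V \ni x$ with $A[\Phi+\Psi] = A[\Phi]$ whenever $\supp\Psi \sse V$. Fix an arbitrary $\Phi \in \C$ and an arbitrary compactly supported $\Psi \in \C$ with $\supp\Psi \sse V$. Since $\C$ is a vector space, $\Phi + t\Psi \in \C$ and $\supp(t\Psi) \sse V$ for every real $t$, so $t \mapsto A[\Phi+t\Psi]$ is constant and its derivative at $t=0$ vanishes. Differentiating under the integral sign --- legitimate because the integrand $(j^k(\Phi+t\Psi))^*\alpha$ is jointly smooth in $(t,x)$ and supported, uniformly for $t$ near $0$, in the fixed compact set $\supp_M\alpha$ --- and inserting the identity~\eqref{eq:EL-def} gives
\begin{equation}
	0 = \left.\frac{\d}{\d t}\right|_{t=0} A[\Phi+t\Psi]
	= \int_M (j^k\Phi)^*\delta_{EL}[\alpha]_a \, \Psi^a + \int_M \d\xi[\Phi;\Psi] .
\end{equation}
The last integral vanishes by Stokes' theorem, since $\xi[\Phi;\Psi]$ is a differential operator applied to the compactly supported $\Psi$ and therefore has compact support. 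Hence $\int_M (j^k\Phi)^*\delta_{EL}[\alpha]_a \, \Psi^a = 0$ for every compactly supported $\Psi$ with $\supp\Psi \sse V$, and the fundamental lemma of the calculus of variations forces $(j^k\Phi)^*\delta_{EL}[\alpha]_a = 0$ on $V$, for every $a$. Finally, letting $\Phi$ vary and using that the jet evaluation $\Phi \mapsto j^k\Phi(y)$ is surjective onto the fiber $(J^kF)_y$ for each $y \in V$ (a prescribed finite-order jet is realized by a polynomial section in an adapted chart), we conclude that $\delta_{EL}[\alpha]_a$ vanishes on all jets lying over $V$. So $V$ is disjoint from $\supp_M \delta_{EL}[\alpha]$, i.e.\ $x \notin \supp_M\delta_{EL}[\alpha]$.

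The routine ingredients are the support bookkeeping of the first inclusion and, for the second, the standard first-variation computation underlying~\eqref{eq:EL-def} together with the elementary differentiability of $t \mapsto A[\Phi+t\Psi]$. The step I expect to require the most care is the second inclusion: one must be sure the boundary term $\int_M \d\xi$ genuinely drops --- which is why the test perturbations are taken compactly supported inside $V$, so that $\xi[\Phi;\Psi]$ has compact support --- and one must then upgrade the vanishing of \emph{every} pullback $(j^k\Phi)^*\delta_{EL}[\alpha]$ over $V$ to the vanishing of the horizontal density $\delta_{EL}[\alpha]$ itself there, which is exactly where the surjectivity of the jet evaluation enters.
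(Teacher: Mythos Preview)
Your proof is correct and follows essentially the same approach as the paper's. Both arguments use the same two ingredients --- the support bookkeeping for $\supp A \sse \supp_M\alpha$, and the Euler--Lagrange identity~\eqref{eq:EL-def} together with surjectivity of the jet evaluation $\Phi \mapsto j^k\Phi(y)$ for $\supp_M\delta_{EL}[\alpha] \sse \supp A$ --- the only difference being that you phrase the second inclusion by contraposition (assuming $x\notin\supp A$ and deducing that $\delta_{EL}[\alpha]$ vanishes over a neighborhood of $x$), whereas the paper argues directly (starting from a point of $\supp\delta_{EL}[\alpha]$ and exhibiting a perturbation witnessing membership in $\supp A$).
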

\begin{proof}
The second inclusion is trivial, because $(j^k(\Phi+\Psi))^*\alpha =
(j^k\Phi)^*\alpha$ whenever $\supp \Psi$ is outside of $\supp_M\alpha$,
since the restriction of both sides of the equality to $\supp\Psi$ is
simply zero. The rest, namely $\supp A \sse \supp_M \alpha$, follows
from the defining Equation~\eqref{eq:supp}.

On the other hand, suppose that $p \in \supp \delta_{EL}[\alpha] \sse
J^kF$. Then, we can always find a section $\Phi\in \C$ such that
$j^k\Phi(x) = p$, where $x = \pi^k(p)\in \supp_M \delta_{EL}[\alpha]$.
Since by construction $(j^k\Phi)^*\delta_{EL}[\alpha](x) \ne 0$, for
each open $U\ni x$ there must exist a (without loss of generality
compactly supported) $\Psi\in \C$ with $\supp\Psi \sse U$ such that
\begin{equation}
	\int_M (j^k\Phi)^*\delta_{EL}[\alpha]_a(x) \Psi^a(x) \ne 0 .
\end{equation}
Therefore, by continuity in $t$, the formula in
Equation~\eqref{eq:EL-def} tells us that there must exist a $t\ne 0$,
however small, such that $A[\Phi + t\Psi] \ne A[\Phi]$. That concludes
the proof that $\supp_M \delta_{EL}[\alpha] \sse \supp A$.
\end{proof}

\section{Generalized local observables}\label{sec:glocobsv}
A precise notion of a local functional on the space $\C$ of field
configurations on a field bundle $F\to M$ was given in
Section~\ref{sec:locobsv}. This notion is plenty sufficient to identify
a rich set of observables in the usual relativistic field theories,
including gauge theories like Maxwell electrodynamics and Yang-Mills
theory, but notably excluding gravitational theories like GR or GR with
matter fields. The reason gravitational theories are different is
because, as will be discussed in Section~\ref{sec:diffinv}, the
intersection between the space of local functionals and gauge invariant
functionals on $\C$ is trivial (it consists only of constant functions).
On the other hand, we can relax the above notion of locality in a
precise way, without sacrificing much in the way of the physical
motivation that lead to it, such that the new class of generalized local
functionals does admit a rich set of gauge invariant observables even in
gravitational theories. We discuss this precise generalized notion of
locality below and leave the applications to gravitational theories to
Section~\ref{sec:diffinv}.

The two main properties of local functionals that we would like to relax
are the (a) global domain of definition and (b) field independent
compactness of support. We explain both of these properties and how they
could be relaxed below.

Any element $A\in C^\oo(\C)$, by definition, gives a well-defined value
$A[\Phi]$ for any $\Phi\in \C$. That is, the domain of definition of $A$
is all of $\C$ (it is \emph{global}). Imagine, on the other hand, that
$A$ is defined only on a subset $\U\sse \C$. Could then $A$ still play
the role of a physically meaningful observable? The answer is a
qualified \emph{yes}, provided $\U$ is sufficiently large, for example
an open set. Such a restriction may be necessary if, for instance, we
have precise control only over solutions that are not too distant from a
reference solution,%
	\footnote{An example of this kind is the celebrated result of
	Christodoulou and Klainerman~\cite{ck} on the stability of Minkowski
	space in GR. Their result essentially constructs an open neighborhood
	$\U$ of the Minkowski metric on the phase space $\cP$ of GR on
	$\mathbb{R}^4$ with asymptotically flat boundary conditions. On the
	other hand, we still have very little information about $\cP$ outside
	that neighborhood.} %
some $\Phi\in \U$. At the classical level, having $A$ and $B$ defined on an
open neighborhood $\U \ni \Phi$ is sufficient to compute their Poisson
brackets%
	\footnote{Strictly speaking, Poisson brackets are expected to be
	defined only upon restriction to the phase space $\cP\sso \C$. However,
	it is sometimes possible to lift Poisson brackets even to $\C$. This
	will be discussed in more detail in Section~\ref{sec:pois}} %
at $\Phi$ because that involves only local, differential operations.
Perturbative QFT about $\Phi$ will also not be sensitive to anything
outside an arbitrary neighborhood. Eventually, a non-perturbative
formulation of a QFT would likely require observables to be globally
defined. However, even then, we are likely to be interested in quantum
states that (e.g.,\ in a phase space formulation of quantum theory)
would assign negligible weight to solutions outside a neighborhood $\U$
of some reference solution $\Phi$. To accommodate such an eventual
situation, we could globalise the domain of definition of $A\in
C^\oo(\C)$ by extending it in an arbitrary, though controlled way, to
all of $\C$ using standard geometric tools, like the Tietze extension
and Steenrod-Wockel approximation theorems~\cite{wockel}.

Given that we would like the domain $\U\sso \C$ of a generalized local
functional to be open, it is important to reflect on the topology that
we use on $\C$. Technical details on various topologies on function
spaces can be found in the references~\cite{hirsch,km}. It was stated in
the Introduction that it is conventional to endow $\C = \Secs(F)$ with the
\emph{Whitney weak topology}, whose open sets are generated by those of
the form
\begin{equation}
	\U^k_{K,U} = \{ \Phi \in \Secs(F) \mid j^k\Phi(K) \sse U \} ,
\end{equation}
where $k\ge 0$, $K\sse M$ is compact and $U\sse J^kF$ is open. The big
disadvantage of the weak topology is that its neighborhoods cannot control
the behavior of a section outside of a compact subset of the spacetime
$M$, as we will need to do in the sequel. However, except in some cases
when boundaries are present, the spacetimes that are of physical
interest are non-compact. For example, any globally hyperbolic spacetime
must be of the form $M \cong \mathbb{R}\times \Sigma$. An alternative
topology is the \emph{Whitney strong topology}, whose open sets are
generated by those of the form
\begin{equation}
	\U^k_U = \{ \Phi \in \Secs(F) \mid j^k\Phi(M) \sse U \} ,
\end{equation}
where $k\ge 0$ and $U\sse J^kF$ is open. The big disadvantage of the
strong topology is that it is incompatible with the structure of a
topological vector space on $\C$ (multiplication by scalars fails to be
continuous), let alone a Fr\'echet or any other kind of manifold
structure. Note, though, that since our manifolds can be exhausted by
compact sets, any open set in the strong topology is at worst a
$G_\delta$ set in the weak topology (a countable intersection of open
sets). Fortunately, there are many intermediate topologies between the
weak and the strong that both allow a Fr\'echet structure and control
the behavior of sections on all of $M$. One example is a variation on
the strong topology that allows only those open $U\sse J^kF$ that have
``uniform'' vertical size over $M$ with respect to some connection, such
as one induced by an auxiliary Riemannian metric.  Another possibility
is to add a compactifying boundary to $M$ and restrict our attention only
those sections that extend in some nice way to the boundary,%
	\footnote{Perhaps the simplest implementation of this idea is to
	consider a piece of a globally hyperbolic spacetime that is bounded by
	two compact Cauchy surfaces as a compact spacetime in its own right
	with the future and past Cauchy surfaces as its boundaries.} %
then using the weak topology on that subspace with respect to
the compactified spacetime $M$. However, it does not seem that there is
an a priori canonical choice of such an intermediate topology and that
the choice must be made in a way that is compatible with the behavior of
solutions of the equations of motion of the theory. Note that a similar
discussion, and in a related context, can be found in Section~5.2.1
of~\cite{kh-caus}.

Being pragmatic, we stick to the Whitney strong topology for the
remainder of this note, despite its drawbacks. The working hypothesis is
that the results that will be found in the sequel, and the methods used
to obtain them, will naturally generalize to the appropriate choice of
intermediate topology.

Next, having taken the liberty of considering functionals that are
defined only on open subsets $\U \sse \C$, let us consider the
difference between the spacetime supports of a functional $A\in
C^\oo(\C)$ and its restriction $A|_\U \in C^\oo(\U)$. We can reasonably
define $\supp A|_\U$ by replacing $\C$ with $\U$ in the
definition~\eqref{eq:supp}. The logical quantifiers are arranged such
that $\supp A|_\U \sse \supp A$. In fact, we can define the even finer
notion of \emph{spacetime support at $\Phi$ with respect to $\U$} given by
\begin{multline}\label{eq:lsupp-u}
	\supp_\Phi A|_\U = \{ x \in M \mid \forall \text{ open } U \ni x ~
		\exists (\Phi+\Psi) \in \U \colon \\
		\supp \Psi \sse U \text{ and } A[\Phi+\Psi] \ne A[\Phi] \} ,
\end{multline}
which is also always closed. A further refinement is the notion of
\emph{spacetime support at $\Phi$} given by
\begin{equation}\label{eq:lsupp}
	\supp_\Phi A = \bigcap_{\U} \supp_\Phi A|_\U ,
\end{equation}
with the intersection taken over all open neighborhoods $\U \ni \Phi$
such that $A$ is defined on $\U$. The distinction is that while $\supp_\Phi
A|_\U$ depends on the domain $\U$, $\supp_\Phi A$ only depends on
the germ of $A$ at $\Phi$. Then $\bigcup_{\Phi\in \U} \supp_\Phi A \sse
\supp A|_\U$ and $\supp A|_\U = \bigcup_{\Phi\in \U} \supp_\Phi A|_\U$.
So, clearly, $\supp A|_\U$ may fail to be compact, even if each
individual $\supp_\Phi A|_\U$ or $\supp_\Phi A$ is.

For $A|_\U$ to be IR regularizing, as discussed in the Introduction, it
suffices that the spacetime supports $\supp_\Phi A$ be compact for
each $\Phi \in \U$. Thus, the much stronger condition of compact $\supp
A|_\U$ for an observable $A|_\U$, while obviously sufficient for IR
regularity, is not necessary. Such a relaxation of the requirements on
the field-dependent spacetime support of observables was previously
considered in~\cite[Sec.5.3.5]{kh-caus} (see also~\cite{sharapov}).

At the level of local functionals, we can relax the notion of locality
given in Section~\ref{sec:locobsv} in the following way. Let $\Phi\in
\C$ be a field configuration and $\alpha \in \Omega^{n,0}(F)$ be a
horizontal density such that the intersection $j^k\Phi(M) \cap \supp
\alpha \sse J^kF$ is compact. Then we call the functional
\begin{equation}\label{eq:gloc-int}
	A[\Psi] = \int_M (j^k\Psi)^* \alpha
\end{equation}
a \emph{generalized local functional (or observable) at $\Phi$}. The
following result makes precise the way in which the properties of the
functional $A$ fit with the preceding discussion.
\begin{thm}\label{thm:loc}
With $\Phi$ and $\alpha$ as above, there exists an open $\U\sse \C$ (in
the strong topology) with $\Phi\in \U$ such that, for all $\Psi \in \U$,
the integral in~\eqref{eq:gloc-int} is convergent and both $\supp_\Psi
A|_\U$ and $\supp_\Psi A$ are compact.
\end{thm}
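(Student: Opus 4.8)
The plan is to exhibit the neighbourhood $\U$ explicitly as a basic strong-topology open set $\U^k_U$ for a suitably chosen open $U\sse J^kF$, and then to verify the three assertions --- convergence of~\eqref{eq:gloc-int}, and compactness of $\supp_\Psi A|_\U$ and of $\supp_\Psi A$ --- by elementary point-set arguments, using throughout that $j^k\Psi$ is a \emph{section} of $\pi^k$, hence a homeomorphism onto its (closed) image with inverse $\pi^k$.

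First I would construct $U$. Set $K = j^k\Phi(M)\cap\supp\alpha$, which is compact by hypothesis. Since $J^kF$ is locally compact Hausdorff, choose an open $W$ with $K\sse W$ and $\overline W$ compact, and put $U = \bigl(J^kF\sm\supp\alpha\bigr)\cup W$. This is open; it contains $j^k\Phi(M)$, because a point of $j^k\Phi(M)$ either lies outside $\supp\alpha$ or lies in $j^k\Phi(M)\cap\supp\alpha=K\sse W$; and $U\cap\supp\alpha\sse W\sse\overline W$ is relatively compact. Take $\U=\U^k_U$, which is strong-open and contains $\Phi$, and set $L=\pi^k(\overline W)$, a compact (hence closed) subset of $M$. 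For any $\Psi\in\U$ one has $j^k\Psi(M)\cap\supp\alpha\sse U\cap\supp\alpha\sse\overline W$, so $\supp\bigl((j^k\Psi)^*\alpha\bigr)\sse(j^k\Psi)^{-1}(\supp\alpha)=\pi^k\bigl(j^k\Psi(M)\cap\supp\alpha\bigr)\sse L$; thus $(j^k\Psi)^*\alpha$ is smooth with compact support and the integral~\eqref{eq:gloc-int} converges.

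Next I would bound the supports. I claim $\supp_\Psi A|_\U\sse L$ for every $\Psi\in\U$. Let $x\notin L$ and take $V=M\sm L$, an open neighbourhood of $x$, as the test set in~\eqref{eq:lsupp-u}. For any $\Xi$ with $\supp\Xi\sse V$ and $\Psi+\Xi\in\U$, the sections $\Psi$ and $\Psi+\Xi$ coincide on the open set $M\sm\supp\Xi$, hence so do their $k$-jet extensions and therefore $(j^k(\Psi+\Xi))^*\alpha=(j^k\Psi)^*\alpha$ there; and on $M\sm L$ both pulled-back forms vanish by the previous paragraph. Since $\supp\Xi\cap L=\varnothing$, the sets $M\sm\supp\Xi$ and $M\sm L$ cover $M$, so the two forms agree everywhere and $A[\Psi+\Xi]=A[\Psi]$; thus $x\notin\supp_\Psi A|_\U$. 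As $\supp_\Psi A|_\U$ is closed and contained in the compact set $L$, it is compact. Finally, $\U$ is an open neighbourhood of $\Psi$ on which $A$ is defined, so it occurs among the domains in~\eqref{eq:lsupp}, giving $\supp_\Psi A\sse\supp_\Psi A|_\U\sse L$; being closed, $\supp_\Psi A$ is compact too.

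I expect no step to be genuinely difficult. The one move that is not automatic is the choice $U=(J^kF\sm\supp\alpha)\cup W$, which simultaneously makes $U$ a neighbourhood of all of $j^k\Phi(M)$ while keeping $U\cap\supp\alpha$ relatively compact. The only point requiring a little care is the chain $\supp((j^k\Psi)^*\alpha)\sse(j^k\Psi)^{-1}(\supp\alpha)=\pi^k(j^k\Psi(M)\cap\supp\alpha)\sse L$ and its compactness: here one must use that $j^k\Psi$ is a section, so that the non-compactness of $M$ does not re-enter through a preimage of a compact set.
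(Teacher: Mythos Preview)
Your proof is correct and follows essentially the same approach as the paper's. The only cosmetic difference is that the paper first fixes a compact neighbourhood $Q\sse M$ of $\pi^k(K)$ and then chooses the open set in $J^kF$ to avoid $\supp\alpha$ over $M\sm Q$, whereas you first fix a relatively compact neighbourhood $W\sse J^kF$ of $K$ and then project to obtain $L=\pi^k(\overline W)\sse M$; the resulting strong-open neighbourhoods $\U$ and the ensuing support bounds are the same in spirit, and your formulation has the minor advantage of fitting the basic open sets $\U^k_U$ exactly as defined.
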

\begin{proof}
Pick a compact neighborhood $Q$ of $K = \pi^k (j^k\Phi(M) \cap \supp
\alpha)$ and an open neighborhood $U\sso J^kF$ of $j^k\Phi(M\setminus
Q)$ that does not intersect $\supp \alpha$. Let $\U\sse \C$ be the set
of all sections $\Psi\colon M\to F$ such that $j^k\Psi(M\sm Q) \sso U$.
Clearly, $\Phi\in \U$ and, by the definition of the Whitney strong
topology, $\U$ is open. By construction, for any $\Psi \in \U$, we have
$(j^k\Psi)^* \alpha = 0$ on $M\sm Q$. This means that $\supp [
(j^k\Psi)^*\alpha] \sse Q$ and is itself compact (by virtue of being a
closed subset of a compact set) and hence the integral defining
$A[\Psi]$ is convergent. Finally, from the definition of $\U$, any point
$x\in M\sm Q$ has a neighborhood $V\sse M\sm Q$ such that any $\Delta$
that has $\supp \Delta \sse V$ and with $\Psi + \Delta \in \U$ and must
satisfy $j^k(\Psi+\Delta)^*\alpha = 0$ on $V$ and hence $A[\Psi +
\Delta] = A[\Psi]$. Therefore $\supp_\Psi A|_\U \sse Q$ and hence is
itself compact. Its subset $\supp_\Psi A \sse \supp_\Psi A|_\U$ is
closed and hence also compact.
\end{proof}

\section{Gauge invariance and local observables in gravitational theories}\label{sec:diffinv}

GR is the theory of a Lorentzian metric field $G$,
so that $F = S^2T^*M$, with the equation of motion (Einstein equation)
specified by the Einstein-Hilbert Lagrangian, $\mathcal{L}[G] = R[G]\,
\mathrm{vol}_G$, where $R[G]$ is the Ricci scalar and $\mathrm{vol}_G$
is the metric volume form. This Lagrangian also determines the gauge
symmetries of the theory, which consist of diffeomorphisms of $M$ acting
by pullback on metrics, $G \mapsto \chi^* G$ for a diffeomorphism
$\chi\colon M\to M$. Thus, the physical (or reduced) phase space of GR
is the quotient $\bP = \cP / \G$, where $\cP\sso \C$ is the set of
solutions of Einstein equations (usually also taken to be globally
hyperbolic) and $\G$ is the group of gauge transformations
(diffeomorphisms of $M$). The observables that we are really interested
in are those that constitute the algebra $C^\oo(\bP)$.  As before, it is
convenient to use the quotient map $\cP \to \bP$ to identify $C^\oo(\bP)
\sso C^\oo(\cP)$ with those observables that are invariant under the
action of the group $\G$ of gauge transformations.  We refer to any
element $A \in C^\oo(\cP)$, or $C^\oo(\C)$, as a \emph{gauge invariant
observable (or functional)} if it is left invariant by the action of
$\G$, that is, $A[\chi^* G] = A[G]$ for any diffeomorphism $\chi\colon M
\to M$. For our purposes, a \emph{gravitational theory} is a field
theory that involves a metric tensor $G$ (though possibly other fields
as well) and has the diffeomorphism group as the group $\G$ of gauge
transformations.  Clearly, GR is the representative example of a
gravitational theory, but GR coupled to matter fields also falls into
the same category. We will only consider pure GR below, but the
discussion will also apply to more general gravitational theories.

It is a well-known folk result that GR does not have any \emph{local}
and \emph{gauge invariant observables} in the standard sense of locality
discussed in Section~\ref{sec:locobsv}. However, the main observation of
this note is that there in fact do exist local and gauge invariant
observables in the generalized sense discussed in
Section~\ref{sec:glocobsv}. The non-existence argument is pretty
straight forward. Let $\alpha$ be a horizontal density on $k$-jets with
$\supp_M \alpha$ compact and hence $A[G] = \int_M (j^k G)^* \alpha$ a local
observable. A diffeomorphism $\chi\colon M\to M$ acts on it as
\begin{equation}
	(\chi A)[G] = A[\chi^* G] = \int_M (j^k (\chi^* G))^* \alpha
		= \int_M (j^k G)^* \left[ (p^k \chi^*)^* \alpha \right],
\end{equation}
where $p^k \chi^* \colon J^kF \to J^kF$ is the natural $k$-jet
prolongation of the pullback action of a diffeomorphism on metrics
$\chi^*\colon F\to F$. Clearly, the spacetime support of $\alpha$
transforms as $\supp_M \left[ (p^k\chi^*)^*\alpha \right] = \chi(\supp_M
\alpha)$. Thus, by Lemma~\ref{lem:supp}, the $\supp \chi A$ moves around
on $M$ under the action of diffeomorphisms. So, since we can choose
$\chi$ such that $\supp A$ and $\supp \chi A$ do not coincide, the
functionals $A$ and $\chi A$ themselves cannot coincide. In particular,
no observable $A$ can be gauge invariant if its spacetime support is
different from $M$ (diffeomorphisms act on $M$ transitively). Spacetime
manifolds of physical interest are never compact, hence no local
observable (with, by definition from Section~\ref{sec:locobsv}, compact
spacetime support) can be gauge invariant. Colloquially, this is phrased
as follows: \emph{gauge transformations of gravitational theories move
spacetime points}. This property is in contrast with gauge theories of
Maxwell or Yang-Mills type, where gauge transformations leave intact the
spacetime support of observables, thus allowing local observables to be
gauge invariant.

We now give an explicit example of a functional that is both gauge
invariant and local in the generalized sense. Subsequently, we will outline
a general method for constructing more examples of a similar kind. Let
us restrict for the moment the dimension $\dim M = 4$. We will construct
a horizontal density $\alpha\in \Omega^{4,0}(F)$ on $J^3F$. Let
$W_{abcd} = W_{abcd}[G]$ and $\eps_{abcd} = \eps_{abcd}[G]$ denote
respectively the Weyl and Levi-Civita tensors of the metric $G$. Then,
define the dual Weyl tensor $W^*_{ab}{}^{cd} = W_{abc'd'}\eps^{c'd'cd}$
and also the following curvature scalars
\begin{equation}\label{eq:b-def}
\begin{aligned}
	b^1 &= W_{ab}{}^{cd} W_{cd}{}^{ab} , & \quad
	b^3 &= W_{ab}{}^{cd} W_{cd}{}^{ef} W_{ef}{}^{ab} , \\
	b^2 &= W_{ab}{}^{cd} W^*_{cd}{}^{ab} , & \quad
	b^4 &= W_{ab}{}^{cd} W_{cd}{}^{ef} W^*_{ef}{}^{ab} .
\end{aligned}
\end{equation}
We have essentially defined maps $b = (b^1,b^2,b^3,b^4) \colon J^kF \to
\mathbb{R}^4$, for any $k\ge 2$. We will also use the notation $(b^i)$
for the standard global coordinates on this target $\mathbb{R}^4$.
It is sufficient for us to take $k=3$ because we then want to define the
horizontal density $\beta = \h[\d b^1 \wedge \d b^2 \wedge \d b^3 \wedge
\d b^4] \in \Omega^{4,0}(F,3) \sso \Omega^{4,0}(F)$. Choose a point
$r\in \mathbb{R}^4$, and a function $f\in C^\oo(\mathbb{R}^4)$ with
compact support, such that $r\in \supp f$ but $\supp f$ does not
intersect any of the planes $b^i = 0$. Finally, we define the desired
horizontal density $\alpha = f(b) \beta \in \Omega^{4,0}(F)$, which
gives rise to the functional
\begin{equation}\label{eq:lgi-def}
	A[G]
	= \int_M (j^kG)^* \alpha \\
	= \int_M (j^kG)^* \left( f(b^1,b^2,b^3,b^4)\,
			\h[\d{b^1} \wedge \d{b^2} \wedge \d{b^3} \wedge \d{b^4}] \right) .
\end{equation}
By construction, $\alpha$ satisfies two important properties. First,
there is a non-empty open set $\U \sse \C$ (in the strong topology) such
that the form $(j^k G)^* \alpha$ is smooth and has compact support on
$M$ for any $G\in \U$. Thus, $A[G]$ is well-defined on $\U$ and hence
constitutes a generalized local observable in the sense of
Section~\ref{sec:glocobsv}.  The existence of such a domain $\U$ follows
from a general result that will be discussed in
Theorem~\ref{thm:generic} (see also the comments thereafter). Second,
$A[G]$ is invariant under the action of diffeomorphisms. That is,
$(p^k\chi^*)^* \alpha = \alpha$ for any diffeomorphism $\chi\colon M\to
M$, which implies $A[\chi^*G] = A[G]$ for any $G$ on which the defining
integral converges. The last invariance identity has to be used with a
little bit of care, in that it only makes sense when both $G$ and
$\chi^*G$ belong to $\U$, the domain of definition of $A$. Since, a
priori $\U$ is not guaranteed to be itself diffeomorphism invariant,
that condition may not be satisfied for an arbitrary $G\in \U$. One way
to get around this issue is to, very reasonably, declare $A$ to be
invariant under diffeomorphisms if $A[\chi^* G] = A[G]$ whenever both
$G, \chi^*G \in \U$. Another way is to simply enlarge the domain to $\U'
\supseteq \U$ to the smallest diffeomorphism invariant domain that
contains $\U$. Clearly, if $A$ is well defined on $\U$ it is also well
defined on $\U'$. A note of caution for second approach: while
$\supp_\Phi A$, for any $\Phi\in \U$, is not altered by extending $A$
from $\U$ to $\U'$, the inclusion $\supp_\Phi A|_\U \sse \supp_\Phi
A|_{\U'}$ may be strict, with $\supp_\Phi A|_{\U'}$ possibly failing to
be compact even if $\supp_\Phi A|_\U$ is.

In other words $A|_\U \in C^\oo(\U)$ is a local and gauge invariant
observable in the generalized sense of Section~\ref{sec:glocobsv}.

The idea of using the curvature scalars $b^i$ to define observables in
pure gravity goes back to the proposal of Komar and
Bergmann~\cite{bg0,bg1}.  However, these authors, as well as many
subsequent ones who came back to this idea (see~\cite{tamb} and
references therein), intended to use $b^i$ as independent coordinates
and simply express all other fields in terms of them. However, the
resulting observables were often too singular in the sense discussed in
the Introduction, since they would correspond to something like
replacing our test function $f$ with a $\delta$-distribution. On the
other hand, our addition of the integral and the smooth compactly
supported function $f$ and the definition~\eqref{eq:lgi-def} provides
the diffusion of UV singularities and the IR regularization, again
discussed in the Introduction, that are needed in the contexts of QFT
and classical Poisson structure.

The key ingredients in the above construction were the facts that we
could choose the horizontal density $\alpha\in \Omega^{n,0}(F)$ to be
invariant under the prolonged action of diffeomorphisms on $J^kF$ and
the fact that we could choose such an $\alpha$ to have support on $J^kF$
that intersects compactly the image of the prolongation $j^kG(M) \sso
J^kF$ of a certain metric $G$. Natural questions arise. Are there
more local and gauge invariant observables that could be defined in the
same way? Are there sufficiently many such observables to separate
points%
	\footnote{A set of functions \emph{separates} the points of a space
	if, for each pair of points, there exist at least one function that
	takes on different values at these points.} %
on the physical phase space $\bP$ of GR?

The general mathematical context in which the answers must be sought is
known as \emph{differential invariant theory}~\cite{kj,op,kl}. Classical
invariant theory is concerned with identifying functions on a $\G$-space
(a space with an action of a group $\G$) that are invariant under the
$\G$-action, these are the usual \emph{invariants}. On the other hand,
differential invariant theory, is concerned with fiber preserving group
actions (more generally \emph{pseudogroup} or \emph{groupoid} actions)
on the total space of a bundle, like our field vector bundle $F\to M$,
and the actions induced on $J^kF \to M$ by prolongation. Then,
\emph{differential invariants (of order $k$)} are functions on $J^kF\to
M$ that are invariant under the group action. For our purposes, the
field bundle of metrics is $F = S^2T^*M$ and the group is $\G=\Diff(M)$,
consisting of diffeomorphisms $\chi \colon M\to M$, and acting by
pullback $\chi^*\colon F\to F$. Differential invariants are then
precisely the so-called \emph{curvature scalars}, that is, scalar
functions tensorially constructed out of the metric, the Riemann
curvature tensors and its covariant derivatives. For example, the $b^i$
defined in Equation~\eqref{eq:b-def} are differential invariants of
order $k=2$. There are two ways of looking at differential invariants:
\emph{algebraically} and \emph{geometrically}. Most structural results
are proven from the algebraic perspective. On the other hand, it is
easier to see from the geometric perspective how to construct local
gauge invariant observables similar to the example of
Equation~\eqref{eq:lgi-def}.

The main structural algebraic result that we would like to mention is
the so-called \emph{Lie-Tresse} theorem, which dates back to the end of
the 19th century, but was established in its global form only rather
recently (see~\cite{kl} and the references therein). This theorem is an
analog of the \emph{finite generation} results, originally due to
Hilbert, in classical invariant theory~\cite{olv}. For differential
invariants, in addition to algebraic operations, we also need to allow
differentiation to generate differential invariants of arbitrary orders
from finite data. Before stating the result, let us recall the geometric
formulation of differential equations in terms of jets,
cf.~\cite[Apx.B]{kh-peierls} for more details and references. A
differential equation of order $m\ge 0$, is usually specified
in \emph{equational form}, $P[\psi] = 0$, where $P\colon \Secs(F) \to
\Secs(E)$ a possibly non-linear differential operator of order $m$ that
takes sections of a bundle $F\to M$ as arguments and output sections of
some other vector bundle $E\to M$. Essentially equivalently, we can
specify a differential equation of order $m$ as submanifold
$\mathcal{E}\sse J^mF$. Roughly, the set of all $m$-jets that satisfy
$P=0$ constitutes the subset $\mathcal{E}$ and inversely, any bundle map
$P\colon J^mF \to E$ that is zero only on $\mathcal{E}\sse J^mE$ defines
the corresponding differential operator. A particular example could be
$\mathcal{E} = J^mF$, which corresponds to the trivial equation $0=0$. A
differential equation $\mathcal{E}\sse J^mF$ has natural prolongations
$\mathcal{E}^{(k)} \sse J^kF$ for all $k\ge m$, which corresponds to
taking into account all equations of the form $\del_{i_1} \cdots
\del_{i_{k-m}} P[\psi] = 0$ implied by $P[\psi] = 0$. The following
result is a rough restatement (sufficient for the purposes of this note)
of the precise results of Theorems~1 and 2 of~\cite{kl}.

\begin{prop}[Lie-Tresse]\label{prp:lt}
Consider a differential equation $\mathcal{E}\sse J^kF$ with gauge
symmetry,%
	\footnote{In the language of~\cite{kl}, this means that the equation
	is invariant under a pseudogroup action.} %
defined on a field bundle $F\to M$, with the action of gauge symmetries
naturally prolongued to $J^kF \to M$. Assume that the equation and the
gauge symmetry action satisfies a specific global algebro-geometric
regularity condition (which is in fact satisfied by GR with
diffeomorphisms as gauge symmetries). Then, there exists a finite order
$l\ge 0$, a finite number of differential invariants (those left
invariant by gauge transformations) $I_j$ on $J^lF$, and a finite number
of invariant differential operators $D_i$ (such an operator acting on an
invariant yields another invariant) such that any polynomial differential
invariant of an arbitrary order $k\ge 0$ can be expressed as a
polynomial in the generators $I_j$, possibly repeatedly differentiated
by the $D_i$. Finally, for arbitrary order $k\ge 0$, the differential
invariants separate the orbits of the gauge symmetry on a dense open
subset $\mathring{\mathcal{E}}^{(k)} \sse \mathcal{E}^{(k)}$ consisting
of generic orbits.
\end{prop}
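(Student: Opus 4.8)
The plan is not to reprove the Global Lie--Tresse theorem — of which the statement is a deliberately coarse restatement — but to recall the shape of the argument of~\cite{kl} and to indicate why its hypotheses are satisfied by gravitational theories. One first passes to the infinite prolongation: the equations $\mathcal{E}^{(k)}\sse J^kF$ form an inverse system whose limit $\mathcal{E}^{(\oo)}$ carries a fibre-preserving action of the prolonged gauge pseudogroup over $M$. The objects to analyze are the algebra $\mathscr{I}$ of differential invariants — functions on $\mathcal{E}^{(\oo)}$, locally rational and polynomial along the fibres, annihilated by every prolonged infinitesimal gauge symmetry — together with the module of \emph{invariant (Tresse) derivations}, i.e.\ horizontal derivations that commute with the pseudogroup action and carry $\mathscr{I}$ into itself.

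The first, purely algebraic half of the theorem is an infinite-jet analogue of Hilbert's basis theorem. Filtering $\mathscr{I}$ by jet order, one studies at each order $k$ the symbol of the invariants, which is governed by a finitely generated graded module over the polynomial algebra of the cotangent directions of $M$; the crucial input is that the Spencer-type cohomology of the prolonged symbol of the gauge algebra vanishes in all degrees above some finite $l$. Beyond order $l$ every new differential invariant is then obtained from invariants of order $\le l$ by repeatedly applying the invariant derivations $D_i$, and the Noetherian property of the relevant graded rings forces the remaining generators $I_j$ to be finite in number; compatibility with the fibrewise algebraic operations gives the stated polynomial normal form. The unnamed ``global algebro-geometric regularity condition'' is precisely what makes these symbol and cohomology ranks locally constant, so that the $I_j$ and $D_i$ can be chosen consistently over all of $\mathcal{E}^{(k)}$ away from a lower-dimensional exceptional locus. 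That the condition holds for GR with $\G=\Diff(M)$ is verified among the examples of~\cite{kl}; it reduces to the well-understood structure of the curvature symbol and of the linearized diffeomorphism action, and the same computation applies in principle when matter fields are added.

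For the separation statement one restricts to the open dense stratum $\mathring{\mathcal{E}}^{(k)}$ on which the prolonged $\G$-orbits attain their maximal dimension (equivalently, minimal stabilizer); this is exactly the locus on which an invariant coframe — such as the one built from $\d b^1\wedge\d b^2\wedge\d b^3\wedge\d b^4$ in the example above — is non-degenerate. There the differentials of finitely many differential invariants of order $\le l$ span the conormal bundle to the orbits, so these invariants separate infinitesimally, hence locally, distinct orbits; passing to the fibres of the evaluation map of the generators (or a standard properness argument) upgrades this to genuine separation of all orbits meeting $\mathring{\mathcal{E}}^{(k)}$. The real work, and the main obstacle, is not the Noetherian bookkeeping but this regularity/genericity input: one must establish that the symbol of the gauge algebra and its Spencer cohomology have the constant-rank behaviour needed for the generators and invariant derivations to exist globally, and that the exceptional locus has positive codimension so that $\mathring{\mathcal{E}}^{(k)}$ is indeed open and dense. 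For gravitational theories this is the hypothesis imported from~\cite{kl}; checking it in a concrete model with a given matter content is nontrivial but essentially mechanical, via the linearized field equations and their symbols.
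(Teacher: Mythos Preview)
The paper does not prove this proposition at all: it is explicitly introduced as ``a rough restatement (sufficient for the purposes of this note) of the precise results of Theorems~1 and~2 of~\cite{kl}'' and is simply quoted as an imported result, with no argument given. So there is no ``paper's own proof'' to compare against; your sketch goes well beyond what the paper does.

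That said, your outline is a faithful summary of the strategy behind the global Lie--Tresse theorem (filtration by jet order, Spencer-type cohomology controlling the symbol, Noetherianity giving finite generation, invariant/Tresse derivations producing higher-order invariants, and genericity of the maximal-orbit stratum for separation). As a ``proof strategy'' paragraph it is reasonable and not misleading. One caution: you assert that the regularity hypothesis ``is verified among the examples of~\cite{kl}'' for GR with $\G=\Diff(M)$; in the paper this is stated parenthetically as a fact, not argued, so if you keep that claim you are still deferring to the same reference rather than closing the gap yourself. If the intent is merely to match the paper, you can replace the whole sketch by a one-line citation to~\cite{kl}.
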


More geometrically, we can look at differential invariants as follows.
Consider the quotient spaces%
	\footnote{At the moment, we are not making a notational distinction,
	but we are really only interested in the subset of $J^kF$
	corresponding to the jets of Lorentzian metrics, thus excluding
	degenerate metrics and metrics of other signatures.} %
$\M^k = J^kF / \Diff(M)$, known as the \emph{moduli spaces of $k$-jets
of metrics on $M$}~\cite{gns}, with the projections denoted by
$\mu_k\colon J^kF \to \M^k$. Clearly, differential invariants are
precisely the smooth functions on $J^kF$ that come from the pullback of
continuous functions on $\M^k$, those that belong to $C^\oo(J^kF) \cap
\mu_k^*[C(\M^k)]$. If $\M^k$ were a manifold, it would be sufficient to
consider $C^\oo(\M^k)$ instead of $C(\M^k)$. However, while $\M^k$ is
well-defined as a topological space, it is only a manifold on a dense
open subset~\cite{gns}, say $\rM^k \sso \M^k$. Outside $\rM^k$, $\M^k$
contains orbifold-type singularities, which correspond to jets of
metrics admitting non-trivial isometries. A further complication is that
$\M^k$ is in general not Hausdorff. This means that there exist jets of
metrics that cannot be distinguished by continuous scalar curvature
invariants alone. This phenomenon is particular to Lorentzian (and other
pseudo-Riemannian) metrics and is absent when consideration is
restricted to only Riemannian metrics. The failure of the Hausdorff
property can be traced back to the non-compactness of the orthogonal
group in Lorentzian signature~\cite{gns}.

To connect the algebraic and geometric points of view, consider
Einstein's equations prolonged to an arbitrary order $k\ge 2$ and
represented as a submanifold $\mathcal{E}^{(k)} \sse J^k F$. Clearly,
$\mathcal{E}^{(k)}$ is invariant under diffeomorphisms and so projects
to $\mu_k \colon \mathcal{E}^{(k)} \to \mathcal{R}^k\sse \M^k$, with
$\mathring{\mathcal{R}}^k = \mathcal{R}^k \cap \rM^k$ a submanifold of
$\rM^k$. The polynomial differential invariants mentioned in
Proposition~\ref{prp:lt} are then functions on $\mathcal{R}^k$ and in
fact separate the points of $\mathring{\mathcal{R}}^k$ and, by the
Stone-Weierstrass theorem, generate $C^\oo(\mathring{\mathcal{R}}^k)$ by
limits uniformly converging on compact sets. 

Now we come to the main observation that prompted this note. The
connection between differential invariants and local observables in the
generalized sense of Section~\ref{sec:glocobsv} is most clearly seen with
the help of the manifold $\rM^k$. Namely, consider an $n$-form $\beta
\in \Omega^n(\rM^k)$ with compact support and the horizontal density
$\alpha\in \Omega^{n,0}(F)$ obtained by the horizontal projection of the
pullback of $\beta$, $\alpha = \h[\mu_k^*\beta]$. Letting $\U \sso \C$
be the subset of all metrics $G\colon M\to F$ such that $j^kG(M) \cap
\supp \alpha$ is compact, we can define a local and gauge invariant
observable with domain of definition $\U$ by the usual formula
\begin{equation}\label{eq:obsv-def}
	A[G] = \int_M (j^k G)^* \alpha .
\end{equation}
It is clearly gauge invariant, since by construction $(p^k\chi^*)^*
\alpha = \alpha$. Further, it is clearly local in the generalized sense
of Section~\ref{sec:glocobsv}, provided that $\U$ is open and non-empty.
These properties do hold because of the following
\begin{thm}\label{thm:generic}
Given a non-empty compact set $K \sso \rM^k$, there exists a metric
$G\in \C$ such that $(\mu_k\circ j^kG)^{-1}(K)\sse M$ is non-empty and
compact.  Further, such a metric $G$ has an open neighborhood $\U\sso
\C$ (in the strong topology) such that $\mu_k\circ j^kH(M) \cap K$ is
compact for each $H\in \U$.
\end{thm}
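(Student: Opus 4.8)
The plan rests on three observations. First, every class in the moduli space $\M^k$ is the $k$-jet at any prescribed point of $M$ of some Lorentzian metric: since $\Diff(M)$ acts transitively on $M$, the $k$-jet prolongation of the pullback by a diffeomorphism sending $x_0$ to any other point identifies the fibres of $J^kF$ compatibly with $\mu_k$, so $\mu_k$ is already surjective onto $\M^k$ when restricted to the single fibre $J^kF_{x_0}$; and any $k$-jet is the jet of some local section of $F$, which is Lorentzian near $x_0$ because its $0$-jet is. Second, for $k\ge2$ there is a non-constant continuous differential invariant $I$ on $J^kF$ — I will use the Kretschmann scalar $I=R_{abcd}R^{abcd}$, which factors through $\mu_k$ and which I view interchangeably on $J^kF$ and on $\M^k$ — and one can construct Lorentzian metrics on which $I$ exceeds any prescribed constant outside a compact set. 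Third, a basic neighbourhood $\U^k_U$ of a section in the Whitney strong topology constrains its $k$-jet over all of $M$ at once. Throughout, $b:=\max_K I$, which is finite since $K$ is compact and $I$ continuous; I also assume $K$ closed in $\M^k$, on which I comment at the end.

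\emph{Existence of $G$.} Fix $x_0\in M$ and $p_0\in K$, pick $\sigma_0\in J^kF_{x_0}$ with $\mu_k(\sigma_0)=p_0$, and realize $\sigma_0$ by a Lorentzian metric $G_1$ defined near $x_0$. The construction is then to extend $G_1$ to a global Lorentzian metric $G\in\C$ that agrees with $G_1$ near $x_0$ and for which $Z:=\{x\in M:I[G](x)\le b\}$ is compact. This $Z$ contains $x_0$, since $I[G](x_0)=I(\sigma_0)=I(p_0)\le b$; and it contains $(\mu_k\circ j^kG)^{-1}(K)$, since $\mu_k(j^kG(x))\in K$ forces $I[G](x)=I(\mu_k(j^kG(x)))\le b$. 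Hence $(\mu_k\circ j^kG)^{-1}(K)$ is non-empty and, being a closed subset of the compact set $Z$, is compact.

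\emph{Openness of the domain.} Keep this $G$ and let $Q$ be a compact neighbourhood of $Z$, so that $I[G]>b$ on $M\sm\operatorname{int}Q$. Put $W:=\{\sigma\in J^kF:I(\sigma)>b\}$, which is open, and $U:=(\pi^k)^{-1}(\operatorname{int}Q)\cup W$, which is open. Then $j^kG(M)\sse U$: indeed $j^kG(\operatorname{int}Q)\sse(\pi^k)^{-1}(\operatorname{int}Q)$ while $j^kG(M\sm\operatorname{int}Q)\sse W$ because $I[G]>b$ there. So $\U:=\U^k_U$ is a strong-topology neighbourhood of $G$. Now let $H\in\U$ and suppose $\mu_k(j^kH(x))\in K$; then $I(j^kH(x))=I(\mu_k(j^kH(x)))\le b$, so $j^kH(x)\notin W$, and since $j^kH(x)\in U=(\pi^k)^{-1}(\operatorname{int}Q)\cup W$ it must lie in $(\pi^k)^{-1}(\operatorname{int}Q)$, i.e.\ $x=\pi^k(j^kH(x))\in Q$. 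Therefore $(\mu_k\circ j^kH)^{-1}(K)\sse Q$ for every $H\in\U$; being the preimage of the closed set $K$ under a continuous map it is closed, hence compact, and therefore so is its continuous image $\mu_k(j^kH(M))\cap K$.

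The main obstacle is the construction, in the first part, of the global Lorentzian metric $G$ with $\{I[G]\le b\}$ compact and $G=G_1$ near $x_0$. I expect this to be routine but to require care rather than a new idea, and it is unobstructed: working in charts near the ends of $M$ one can install Lorentzian metrics whose Kretschmann scalar exceeds any prescribed constant — for instance $\lambda(\d x^2-e^{2x^2}\,\d y^2)$, whose Kretschmann scalar equals $4\lambda^{-2}(2+4x^2)^2$, together with its higher-dimensional analogues, with $\lambda>0$ small — arranged along an exhaustion of $M$ and glued to one another and to $G_1$ near $x_0$ across compact collars; the gluing can be kept Lorentzian because the set of Lorentzian inner products on each tangent space is connected, and it does not spoil the bound far out because on each shell $G$ is literally one of these metrics and the collars are compact. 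Two remarks worth recording: the argument uses $k\ge2$ (for $k\le1$ the generic moduli space $\rM^k$ is a single point, and the statement is true, and then trivial, only when $M$ is compact); and the assumption that $K$ is closed in $\M^k$ — not automatic, since $\M^k$ need not be Hausdorff — is what makes the preimages above closed, hence compact rather than merely contained in a compact set. This is harmless in the application of the theorem, where $K$ is the support of a form on $\rM^k$ and hence closed; and even in general the argument still yields, for each $H\in\U$, a fixed compact $Q\sse M$ with $(\mu_k\circ j^kH)^{-1}(K)\sse Q$.
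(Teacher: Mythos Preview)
Your proof is correct and takes a genuinely different route from the paper's. For the existence part, the paper makes $G$ \emph{flat} outside a compact set: it prescribes the $k$-jet of Minkowski on $M\setminus U$ and the desired jet $p$ at a single interior point, then invokes the Whitney extension theorem to interpolate; since the Minkowski jet lies in the highly symmetric locus $\M^k\setminus\rM^k$, the preimage of $K\subset\rM^k$ is forced into $\bar U$. You go the opposite way, making $G$ \emph{highly curved} outside a compact set so that the Kretschmann scalar exceeds $b=\max_K I$ there, and you patch by an explicit gluing rather than a black-box extension theorem. The paper's argument is shorter; yours is more hands-on but does not use the hypothesis $K\subset\rM^k$ at all---only compactness of $K$---so it actually establishes a slightly stronger statement. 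For the openness part the two arguments are close in spirit (build a basic strong-topology neighbourhood whose defining open set in $J^kF$ traps the preimage of $K$ in a fixed compact), but note that your version, via the superlevel set $\{I>b\}$, is tied to the particular $G$ you built, whereas the paper's graph argument in $M\times\M^k$ uses only that $(\mu_k\circ j^kG)^{-1}(K)$ is compact and therefore applies to \emph{any} such $G$; this generality is what the paper later needs in the proof of Theorem~\ref{thm:sep}. Your closing remarks on the closedness of $K$ in the non-Hausdorff space $\M^k$ and on the degenerate range $k\le1$ are careful points the paper leaves implicit.
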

\begin{proof}
First, we deal with the statement about existence. Let us ignore for the
moment issues that might arise from non-trivial topology of $M$ and
assume that $M\cong \mathbb{R}^n$, with some fixed global coordinate
system. Let $\eta\colon M\to F$ be the standard Minkowski metric in
those global coordinates. Take a point $r\in K \sso \rM^k$, a point $x\in M$
and an open neighborhood $U\sso M$ of $x$ with compact closure. By
construction, there is a jet $p\in J^k_xF$ such that $\mu_k(p) = r$.
Consider the closed set $Q = (M\sm U) \cup \{x\}$.  Define $G^k_Q \colon
Q\to J^kF$ so that $G^k_Q(x) = p \in J^k_xF$ and $G^k_Q(y) = j^k_y\eta
\in J^k_yF$ for any $y\ne x$. By the Whitney extension
theorem~\cite[\textsection 22]{km}, there exists a metric $G\in \C$ such
that $j^kG(x)|_Q = G^k_Q$, which we can choose to be everywhere
Lorentzian (non-degenerate). Thus, $\mu_k\circ j^kG$ and $K$ have at
least the point $r$ in common. On the other hand, by construction, the
pre-images $(\mu_k\circ j^kG)^{-1}(K) \sso (\mu_k\circ j^kG)^{-1}(\rM^k) \sso M$ must be contained in
$\bar{U}$, which is compact. Hence, the pre-image of $K$ must be
compact, since it is closed and contained in $\bar{U}$. The same
argument can be adapted without much difficulty to the case when $M$ has
more complicated topology.

Second, we deal with the statement about an open neighborhood of $G\in
\C$, which was constructed above. The following argument echos the proof
of Theorem~\ref{thm:loc}. We will define $\U = \{ H\in \C \mid j^kH(M)
\sso U \}$, for some to be determined open neighborhood $U \sso F$ of
$j^kG(M)$. Obviously $G\in \U$ and $\U$ would be open in the strong
topology. We build $U$ as the pre-image of an open set $V\sse M \times
\M^k$ with respect to the map $(\pi^k,\mu_k)\colon J^kF \to M\times
\M^k$. If $V$ is an open neighborhood of the graph of $\mu_k \circ
j^kG\colon M \to \M^k$, then $U$ is an open neighborhood of $j^kG(M)$.
The way we constructed $G$ above, the intersection $I$ of the set
$M\times K$ with the graph of $\mu_k\circ j^kG$ is compact. Take an open
neighborhood $V'$ with compact closure of $I$ and let $V = M\times
(\M^k\sm K) \cup V'$. Thus, if $H\in \U$, the intersection of the graph
of $\mu_k\circ j^kH$ with $M\times K$ must be confined to $V'$, which
has compact closure, and hence be compact. The last statement is
equivalent to the pre-image $(\mu_k\circ j^kH)^{-1}(K)\sso M$ being
compact, which concludes the proof.
\end{proof}
Note that a direct application of the above theorem to the compact
$\supp f$ appearing in the definition of the functional $A[G]$ given by
Equation~\eqref{eq:lgi-def}, interpreted as a subset of $\rM^2$,
establishes the claimed existence of a non-empty open domain $\U\sse
\C$, making $A|_\U$ a generalized local observable.

While we have concentrated on the case of gravitational theories, whose
group of gauge transformations consists of diffeomorphisms, this method
of defining gauge invariant local observables happens to reproduce the
set of local observables for theories without gauge symmetries (the
group of gauge symmetries is trivial) and those with gauge theories with
gauge transformations that do not \emph{move points}. Examples of the
latter include the Maxwell and Yang-Mills theories. In the Maxwell
theory, the basic differential invariant is the field strength. In the
Yang-Mills case, the basic differential invariants are the compositions
of the Lie algbra valued curvature forms composed with invariant
polynomials on the Lie algebra. Smearing these basic invariants (or
derivatives thereof) with compactly supported test functions reproduces
the well-known standard local and gauge invariant observables in these
theories~\cite{bbh}.

We conclude this section by coming back to this natural question: are
there enough local and gauge invariant observables in GR to separate the
points of $\C$? In a sense, the answer is No, because we have already
discussed above the fact that certain metrics cannot be distinguished by
local curvature scalars. Further, some metrics may be resistant to
belonging to the domain of definition $\U$ of any generalized local
observable $A|_\U$. This may happen when $M$ is non-compact and a metric
$G$ possesses a region $U\sse M$ such that nearly isometric copies of
$G|_U$ repeat infinitely often throughout $M$ (a kind of almost periodic
property). There is essentially no obstacle to engineering a gauge
invariant local density $\alpha$ on $J^kF$ such that $(j^kG)^*\alpha$
has compact support in $U$, but it will likely also have support within
any region nearly isometric to $G|_U$, thus making the integral over $M$
ill defined. However, these are the only obstacles. We need to introduce
a natural but somewhat technical condition on metrics that avoid these
difficulties.

First, we say that a map $\nu\colon M \to N$ is \emph{image proper}%
	\footnote{Cf.~\cite[Exr.2.4.13]{hirsch}, where this concept is used
	but not named.} %
if there exists an open set $N_0 \sse N$ such that $\nu(M) \sse N_0$ and
$\nu\colon M \to N_0$ is \emph{proper} (the pre-image of any compact set
is compact). Any proper map is image proper, since we can just choose
$N_0 = N$. On the other hand, any embedding is image proper, even if it
is not proper, with any tubular neighborhood fulfilling the role of
$N_0$. Let us say that two metrics $G_1,G_2\in \C$ can be
\emph{distinguished by curvature scalars} if there exists a $k\ge 0$
such that $\gamma_i = \mu_k \circ j^kG_i \colon M \to \M^k$ are image
proper and the images $\gamma_1(M) \cap \rM^k$ and $\gamma_2(M) \cap
\rM^k$ do not coincide as subsets of $\rM^k$.
\begin{thm}\label{thm:sep}
For any two metrics $G_1,G_2\in \C$ that can be distinguished by
curvature scalars, there exists a local functional $A[G]$ defined on a
domain $\U\sse \C$ (open in the strong topology) such that both
$G_1,G_2\in \U$ and $A[G_1] \ne A[G_2]$.
\end{thm}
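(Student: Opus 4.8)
The plan is to exploit the hypothesis that $G_1$ and $G_2$ are distinguished by curvature scalars at some finite order $k$, and then construct a single horizontal density $\alpha = \h[\mu_k^*\beta]$ that integrates to different values on the two metrics while still being a generalized local observable at \emph{both} of them simultaneously. The key point is that the maps $\gamma_i = \mu_k\circ j^kG_i\colon M\to\M^k$ are image proper, so each factors through a proper map into an open set $N_i\sse\M^k$; intersecting with the manifold locus, $\gamma_i^{-1}(\rM^k)$ is an open subset of $M$ on which $\gamma_i$ is proper onto its image in $N_i\cap\rM^k$. Since $\gamma_1(M)\cap\rM^k \ne \gamma_2(M)\cap\rM^k$, there is a point $r\in\rM^k$ lying (without loss of generality) in $\gamma_1(M)$ but not in $\gamma_2(M)$ — or, if the two image sets are merely distinct but neither contains a point absent from the other after closure, one must argue more carefully using the proper-ness of $\gamma_2$ to separate $r$ from $\gamma_2(M)$ by an open set.

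Concretely, I would first fix such an $r\in\rM^k$ with $r\in\gamma_1(M)$ and $r\notin\overline{\gamma_2(M)\cap\rM^k}$ inside $\rM^k$ — this is where image-properness of $\gamma_2$ is essential, since it guarantees $\gamma_2(M)$ is closed in the open set $N_2$ and so its complement near $r$ is open. Then choose a small open ball $B\ni r$ in $\rM^k$ with compact closure $\overline B\sso\rM^k$ disjoint from $\gamma_2(M)$, together with a bump function $f\in C^\oo_c(\rM^k)$ supported in $B$ with $f(r)\ne 0$, and an $n$-form $\beta\in\Omega^n(\rM^k)$ with $\supp\beta\sse B$ chosen so that $\int_M(j^kG_1)^*\h[\mu_k^*\beta]\ne 0$ — this can be arranged because $\gamma_1$ hits $r$ and, by a partition-of-unity/normalization argument localized near a point $x_0\in M$ with $\gamma_1(x_0)=r$, one can pick $\beta$ making the local contribution nonzero and, if necessary, shrink $B$ so no cancellation from other sheets of $\gamma_1(M)$ occurs. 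Set $\alpha = \h[\mu_k^*\beta]$. By construction $(j^kG_2)^*\alpha = 0$ identically (its support maps into $B$, disjoint from $\gamma_2(M)$), so $A[G_2]=0$, while $A[G_1]\ne 0$.

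Next I would verify that $A$ is a generalized local observable with a common domain containing both $G_i$. Since $\gamma_1$ is proper onto its image in $N_1\cap\rM^k$ and $\overline B$ is compact, the set $j^kG_1(M)\cap\supp\alpha$ is compact; the same holds trivially for $G_2$ since that intersection is empty. Applying Theorem~\ref{thm:generic} (or directly Theorem~\ref{thm:loc}) to $G_1$ with the compact set $K=\overline B\sso\rM^k$ produces an open neighborhood $\U_1\ni G_1$ in the strong topology on which the integral converges and has the stated support properties; for $G_2$, any strong-topology neighborhood $\U_2\ni G_2$ of sections whose $k$-jets avoid a neighborhood of $\supp\alpha$ works, and on $\U_2$ one has $A\equiv 0$. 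Take $\U = \U_1\cup\U_2$ (disjointness is harmless, or one intersects with a common open set and re-glues) — more cleanly, run the argument of Theorem~\ref{thm:loc} once with the single compact set $Q$ chosen to be a compact neighborhood of $\pi^k(j^kG_1(M)\cap\supp\alpha)$, obtaining one open $\U$ containing $G_1$, and separately note $G_2$ lies in the analogous open set; then replace $\U$ by the union. On $\U$ the functional $A$ is defined, gauge invariant (since $(p^k\chi^*)^*\alpha=\alpha$ by construction from $\mu_k^*$), and satisfies $A[G_1]\ne A[G_2]$.

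The main obstacle will be the very first step: arranging that $\supp\alpha$ genuinely separates the two image sets in $\rM^k$, i.e.\ finding $r$ and $B$ with $B\cap\gamma_2(M)=\varnothing$ while $\int_M(j^kG_1)^*\alpha\ne 0$. If the distinguishing point $r$ lies in $\gamma_1(M)\setminus\gamma_2(M)$, image-properness of $\gamma_2$ closes $\gamma_2(M)$ in $N_2$ and the separation is immediate; but if the sets differ only up to closure, one must instead pick $r$ in the \emph{interior} (relative to $\rM^k$) of the symmetric difference, or exploit that two distinct image-proper images cannot have the same closure in $\rM^k$ unless one is a proper dense subset of the other — a case excluded by properness. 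A secondary subtlety is ensuring no accidental cancellation in $A[G_1]$ when $\gamma_1^{-1}(B)$ has several components; this is handled by shrinking $B$ around $r$ until $\gamma_1$ is injective on a neighborhood of one preimage point and choosing $\beta$ a positive multiple of a coordinate volume form there, so the integrand has a sign.
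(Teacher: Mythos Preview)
Your proposal is correct and follows essentially the same approach as the paper's own proof: pick $r\in\gamma_1(M)\setminus\gamma_2(M)$ in $\rM^k$, choose a compactly supported $\beta\in\Omega^n(\rM^k)$ near $r$ avoiding $\gamma_2(M)$ with $A[G_1]\ne 0$, and take $\U=\U_1\cup\U_2$ from Theorem~\ref{thm:generic}. You are in fact more careful than the paper on two points it glosses over---using image-properness of $\gamma_2$ to guarantee $\gamma_2(M)$ is closed in $N_2$ so that $r$ can be separated from it by an open ball, and addressing possible cancellation in $A[G_1]$ by shrinking $B$---but the overall architecture is the same.
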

\begin{proof}
By hypothesis, there is a $k\ge 0$ and a point $r\in \rM^k$ such that,
say, $r\in \gamma_1(M) = \mu_k\circ j^kG_1(M)$ but $r\not\in \gamma_2(M)
= \mu_k\circ j^kG_2(M)$. Take a $\beta\in \Omega^n(\rM^k)$ with compact
support such that $r\in \supp\beta$ but $\gamma_2(M) \cap \supp\beta =
\varnothing$.  Let $A[G] = \int_M (\mu_k\circ j^kG)^* \beta$. Since the
map $\gamma_1$ is image proper, we can always choose $\beta$ so that
$\supp \beta$ is small enough to have compact intersection with
$\gamma_1(M)$ and so that $A[G_1] \ne 0$.  On the other hand, by
construction, $A[G_2] = 0$. Finally, since both $\gamma_i(M)$ have
compact intersection with $\supp\beta$ (one of the intersections being
empty), by Theorem~\ref{thm:generic}, there exist (in the strong
topology) open neighborhoods $\U_1$ and $\U_2$ of $G_1$ and $G_2$,
respectively, such that $\mu_k\circ j^kG(M)\cap \supp\beta$ is also
compact for each $G\in \U = \U_1 \cup \U_2$. Clearly, $A[G]$ is well
defined on $\U$ and $G_1,G_2 \in \U$.
\end{proof}

\section{Linearization and Poisson brackets}\label{sec:pois}
Once a class of gauge invariant observables has been defined, as was
done in Section~\ref{sec:diffinv}, we would like to compute Poisson
brackets between them. In general, neither the product nor the Poisson
bracket of two local observables is a local observable (instead it
is \emph{bilocal}, with distributional smearing in case of the Poisson
bracket) and the same is true for local observables in the generalized
sense. It is an important and non-trivial question to decide on a
minimal physically reasonable class of observables that is closed both
under multiplication and Poisson brackets. The answer is essentially a
class of multilocal observables with distributional smearings, which
satisfy a certain \emph{microlocal spectral condition}, which is
discussed in more detail in~\cite{bf-notes,bfr}. Below, we shall not be
concerned with these details and instead content ourselves with a gauge
invariant formula for the Poisson bracket of two local and gauge
invariant observables.

As discussed extensively in~\cite{kh-caus,kh-peierls}, what is usually
known as the canonical Poisson bracket on the physical phase space $\bP$
can be equivalently expressed using the so-called \emph{Peierls formula}
(or \emph{Peierls bracket}). The Peierls formula actually defines a
Poisson bracket not only on $C^\oo(\bP)$, but also extends it to
$C^\oo(\cP)$ and even $C^\oo(\C)$. This extension is not unique and is
influenced, for instance, by the choice of gauge fixing. However, the
restriction of the formula to $C^\oo(\bar{\cP})$ is unique. The
computation of the value of the Poisson bracket $\{A,B\}[G]$ of
arbitrary observables $A$ and $B$ at a particular point (or gauge
equivalence class of field configurations) $G\in \bP$ of a non-linear
field theory reduces to the computation of the Poisson bracket of linear
observables $\dot{A}_G$ and $\dot{B}_G$ in the linear theory obtained by
linearization about $G$. Consider the linearized perturbation $H$ of the
metric $G$. The relation between non-linear observables and linearized
observables is
\begin{equation}
	A[G+\lambda H] = A[G] + \lambda \dot{A}_G[H] + O(\lambda^2).
\end{equation}
In the case of a local observable $A[G] = \int_M (j^kG)^* \alpha$, the
linearized observable is also local, $\dot{A}_G[H] = \int_M
\dot{\alpha}[H]$, where $\dot{\alpha}$ is a density-valued differential
operator defined by
\begin{equation}
	(j^k(G+\lambda H))^*\alpha
		= (j^kG)^*\alpha + \lambda \dot{\alpha}_G[H] + O(\lambda^2).
\end{equation}
We can define similarly $B[G] = \int_M (j^kG)^*\beta$ and $\dot{B}_G[H] =
\int_M \dot{\beta}_G[H]$.

It is also useful to consider the formal adjoint differential operators
$\dot{\alpha}^*_G$ and $\dot{\beta}^*_G$ defined by the existence of
form-valued bidifferential operators $W_\alpha$ and $W_\beta$ such that
\begin{equation}
	f \dot{\alpha}_G[H] - \dot{\alpha}^*_G[f]\cdot H = \d W_\alpha[f,H]
	\quad \text{and} \quad
	f \dot{\beta}_G[H] - \dot{\beta}^*_G[f]\cdot H = \d W_\beta[f,H]
\end{equation}
for arbitrary $f\in C^\oo(M)$ and $H\in \Secs(F)$, with the adjoint
operators valued in the densitized dual bundle $\tilde{F}^* = F^*\otimes
\Lambda^n M$. Let $\U\sse \bP$ be a common domain on which $A$ and $B$
are defined and let $G\in \U$. Then, by the generalized locality property,
$\dot{\alpha}_G[H]$ and $\dot{\beta}_G[H]$ have compact support for
arbitrary $H$. It is then not hard to see that all of
$\dot{\alpha}^*_G[1]$, $W_\alpha[1,H]$, $\dot{\beta}^*_G[1]$ and
$W_\beta[1,H]$ will also have compact support for arbitrary $H$.
Therefore, an application of Stokes' lemma gives us the identities
\begin{equation}\label{eq:linobsv-equiv}
	\dot{A}_G[H] = \int_M \dot{\alpha}^*_G[1]\cdot H
	\quad \text{and} \quad
	\dot{B}_G[H] = \int_M \dot{\beta}^*_G[1]\cdot H .
\end{equation}
The Peierls formula for the Poisson bracket of observables of the form
in Equation~\eqref{eq:linobsv-equiv} was considered explicitly
in~\cite[Sec.4.4]{kh-peierls} (see also~\cite{fs} and~\cite[Ex.3.8]{hs})
and is given by the formula
\begin{equation}\label{eq:peierls}
	\{A,B\}[G]
	= \{\dot{A}_G,\dot{B}_G\}_G
	= \int_{M\times M} \dot{\alpha}^*_G[1](x)\cdot E_G(x,y) \cdot \dot{\beta}^*_G[1](y)
\end{equation}
where $E_G(x,y) = E^+_G(x,y) - E^-_G(x,y)$, with $E^\pm_G(x,y)$ being the
integral kernels of the retarded and advanced Green functions of the
so-called \emph{Lichnerowicz operator} (which is a hyperbolic
differential operator obtained from a de~Donder gauge fixing of the
linearized Einstein equations) of the background metric $G$.

The result is gauge invariant, that is $\{A,B\}[\chi^*G] = \{A,B\}[G]$
for a diffe\-o\-mor\-phism $\chi^*\colon M\to M$, essentially by construction.
More explicitly, since each of the elements in the formula is
invariantly constructed from the metric $G$, the following identities
hold: $\dot{\alpha}^*_{\chi^*G} = \chi^* \dot{\alpha}^*_{G}$,
$\dot{\beta}^*_{\chi^*G} = \chi^* \dot{\beta}^*_{G}$ and
$E_{\chi^*G}(x,y) = (\chi,\chi)^* E_G(x,y)$, where $(\chi,\chi)^*\colon
M\times M \to M\times M$ is defined in the obvious way. Combining these
identities with formula~\eqref{eq:peierls} explicitly shows that
$\{A,B\}$ is a gauge invariant (though now distributional bilocal,
instead of local) observable.

It is also worth examining whether the linearized observable
$\dot{A}_G[H]$ fits the criteria of being a gauge invariant observable
for linearized gravity on the background $G$. The answer is of course
Yes, as follows from the identity $\Lie_v \alpha[G] =
\dot{\alpha}[\Lie_v G]$, where $\Lie_v$ is the Lie derivative with
respect to a vector field $v$, which is the linearized version of the
invariance property $\chi^*\alpha[G] = \alpha[\chi^*G]$, and the Cartan
magic formula $\Lie_v \alpha[G] = \d\left( \iota_v \alpha[G] \right)$
for top-degree forms. For convenience, let us also define the
differential operator $K_G[v] = \Lie_v G$, which we will call the
\emph{Killing operator}. The gauge invariance condition for
$\dot{A}_G[H]$ in linearized gravity consists in the requirement that
$\dot{A}_G[K_G[v]] = 0$ for any vector field $v$. This follows from the
preceding identities:
\begin{equation}
	\dot{A}_G[K_G[v]]
	= \int_M \dot{\alpha}_G[\Lie_v G]
	= \int_M \Lie_v \alpha[G]
	= \int_M \d\left( \iota_v \alpha[G] \right) = 0 ,
\end{equation}
where the last equality follows from the fact that $\iota_v \alpha[G]$
has compact support by the locality hypothesis. Thus, $\dot{A}_G[H]$ is
a linear, local and gauge invariant observable in linearized gravity.

Let us recall the notion of linear, local and gauge invariant observable
from~\cite{fs} (also~\cite[Sec.4.4]{kh-peierls}, \cite[Ex.3.8]{hs}),
which is an observable of the form
\begin{equation}\label{eq:lingrav-obsv}
	C[H] = \int_M \gamma\cdot H ,
\end{equation}
with a compactly supported section $\gamma\colon M \to \tilde{F}^*$ that
satisfies the condition $K_G^*[\gamma] = 0$, where $K^*_G$ is the formal
adjoint of the Killing operator $K_G$. More explicitly, there exists a
form-valued bidifferential operator $W_K$ such that $\gamma\cdot K_G[v]
- K^*_G[\gamma]\cdot v = \d W_K[\gamma,v]$ for any vector field $v$ and
any section $\gamma\colon M\to \tilde{F}^*$; $K_G^*$ is equivalent to
the divergence of a symmetric $2$-tensor.
\begin{prop}
Given the linearized observable $\dot{A}_G[H]$, as discussed above,
there always exists a local observable $C[H]$ in linearized gravity of
the form~\eqref{eq:lingrav-obsv} such that $\dot{A}_G[H] = C[H]$.
\end{prop}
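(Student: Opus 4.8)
The plan is to construct $C$ by hand rather than abstractly: take $\gamma := \dot{\alpha}^*_G[1]$ and show that $C[H] := \int_M \gamma \cdot H$ does the job. By construction of the formal adjoint, $\dot{\alpha}^*_G$ is valued in the densitized dual bundle $\tilde{F}^*$, so $\gamma \colon M \to \tilde{F}^*$ is a section of the correct type and $C$ already has the shape of~\eqref{eq:lingrav-obsv}. Moreover, Equation~\eqref{eq:linobsv-equiv} is precisely the statement $\dot{A}_G[H] = \int_M \gamma \cdot H = C[H]$, valid for every perturbation $H$. So the only thing left to verify is that $\gamma$ meets the two conditions required of the coefficient of a linear, local, gauge invariant observable in linearized gravity: that it has compact support and that $K_G^*[\gamma] = 0$.

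The compact support of $\gamma$ is immediate from generalized locality: for $G$ in the domain of $A$, the density $\dot{\alpha}_G[H]$ has compact support for every $H$, and --- as already noted in the derivation of~\eqref{eq:linobsv-equiv} --- this forces $\dot{\alpha}^*_G[1]$ (together with $W_\alpha[1,H]$) to have compact support as well.

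For the constraint $K_G^*[\gamma] = 0$ I would invoke the gauge invariance of the linearized observable established above, namely $\dot{A}_G[K_G[v]] = 0$ for every vector field $v$. Substituting $H = K_G[v]$ into $\dot{A}_G[H] = \int_M \gamma \cdot H$ gives $\int_M \gamma \cdot K_G[v] = 0$. Now integrate the defining identity $\gamma \cdot K_G[v] - K^*_G[\gamma] \cdot v = \d W_K[\gamma, v]$ over $M$; since $\gamma$ is compactly supported so is $W_K[\gamma, v]$, hence Stokes' theorem annihilates the total-derivative term, leaving $\int_M K^*_G[\gamma] \cdot v = \int_M \gamma \cdot K_G[v] = 0$ for every $v$. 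Letting $v$ range over compactly supported vector fields concentrated near an arbitrary point of $M$, the fundamental lemma of the calculus of variations yields $K^*_G[\gamma] = 0$ identically on $M$, which completes the argument.

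There is no deep obstacle here once~\eqref{eq:linobsv-equiv} is available: the proof is a short chain of identifications plus two applications of Stokes' theorem. The only points demanding attention are the bookkeeping ones --- tracking that compact support propagates through the adjoint and through $W_K$ so that the integrations by parts are legitimate, and using the gauge invariance condition in the form $\dot{A}_G[K_G[v]] = 0$ for arbitrary $v$ (equivalently, for all compactly supported $v$), which is exactly what is needed to pass from the integrated identity to the pointwise vanishing of $K^*_G[\gamma]$.
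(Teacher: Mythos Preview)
Your proof is correct and makes the same choice $\gamma = \dot{\alpha}^*_G[1]$ as the paper. The only difference is in how the constraint $K_G^*[\gamma]=0$ is verified. The paper argues pointwise: it writes $\dot{\alpha}^*_G[1]\cdot K_G[v]$ in two ways---once as a pure exact term using $\dot{\alpha}_G[K_G[v]] = \d(\iota_v\alpha[G])$, and once via the defining identity for $K_G^*$---and then invokes the uniqueness of the Euler--Lagrange derivative in a decomposition $\psi[v]=\phi\cdot v+\d\xi[v]$ to conclude that the zeroth-order coefficient $K_G^*[\dot{\alpha}^*_G[1]]$ must vanish. Your route instead integrates over $M$, uses Stokes (legitimate because $\gamma$ is compactly supported), and applies the fundamental lemma of the calculus of variations. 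Both are sound; the paper's argument is purely local and never needs the integral, while yours is slightly more elementary in that it avoids citing the uniqueness theorem for the variational bicomplex and reduces everything to a standard integration-by-parts.
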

\begin{proof}
For this result to hold, it is clearly sufficient that there exist a
compactly supported section $\gamma\colon M\to \tilde{F}^*$, satisfying
$K_G^*[\gamma] = 0$, and a form-valued linear differential operator
$\mu[H]$, with compact support for arbitrary argument $H\colon M\to F$,
such that $\dot{\alpha}_G[H] = \gamma\cdot H + \d\mu[H]$. We shall
construct such $\gamma$ and $\mu[H]$ explicitly.

Recall the identity $\dot{\alpha}_G[H] = \dot{\alpha}^*_G[1]\cdot H +
\d W_\alpha[1,H]$. We set $\mu[H] = W_\alpha[1,H]$ and $\gamma =
\dot{\alpha}^*_G[1]$. It remains to show that $K_G^*[\dot{\alpha}_G^*[1]]
= 0$. Note that, from the gauge invariance of $\dot{A}_G[H]$ discussed
earlier, we already know that
\begin{align}
	\dot{\alpha}_G^*[1]\cdot K_G[v]
\notag
	&= \dot{\alpha}_G[K_G[v]] + \d W_\alpha[1,K_G[v]] \\
\label{eq:decomp1}
	&= \d\left( \iota_v \alpha[G] + W_\alpha[1,K_G[v]] \right)
\end{align}
for an arbitrary vector field $v$. On the other hand, we also have the
equality
\begin{equation}
\label{eq:decomp2}
	\dot{\alpha}^*_G[1]\cdot K_G[v] = K_G^*[\dot{\alpha}_G^*[1]]\cdot v
		+ \d W_K[\dot{\alpha}^*_G[1],v] .
\end{equation}
The final tool that we need to invoke is the well-known
fact~\cite[Thm.4.7]{olver} that, for any top-degree form valued linear
differential operator $\psi[v]$, in any decomposition of the form
$\psi[v] = \phi\cdot v + \d\xi[v]$ the coefficients $\phi$ and the term
$\d\xi[v]$ are unique (in particular $\phi=\delta_{EL}[\psi[v]]$ is the
Euler-Lagrange derivative of $\psi[v]$). Thus, comparing
Equations~\eqref{eq:decomp1} and~\eqref{eq:decomp2}, we find that
$K_G^*[\dot{\alpha}^*_G[1]] = 0$, as was desired.
\end{proof}

\section{Discussion}\label{sec:discuss}

In this note, we have discussed the notion of local observables in field
theory, advocating that the standard notion of locality
(Section~\ref{sec:locobsv}) should be relaxed in a well-defined way
(Section~\ref{sec:glocobsv}). We have argued that the two motivating
properties of local observables, diffusion of UV singularities and IR
regularization, still hold for generalized local observables in the
sense defined in Section~\ref{sec:glocobsv}.

A small price to pay is that a generalized local observable may be naturally
defined as functions only on an open subset%
	\footnote[1]{A related mathematical phenomenon occurs in complex and
	algebraic geometry. Certain complex and algebraic varieties have very
	few globally defined functions. By restricting to open subsets, many
	more functions can be considered, that otherwise developed
	singularities if extended to the entire space. Such partially defined
	functions are studied in the theory of \emph{sheaves}. We have not
	developed this analogy in detail because there is not yet a clear
	application of sheaf theory in this context, other than as a concise
	terminology.} %
of the full phase space of the field theory. Classically, it is no
problem to restrict one's attention to an open subset of the full phase
space. If needed, such an observable may be extended to the full phase
space by appealing to basic results in differential topology. We have
shown that linearization about a specific point of the configuration
space gives a gauge invariant observable for linearized gravity on the
corresponding background, irrespective of how large is the neighborhood
of the linearization point on which the observable can be defined. That
is of course the expected result for the linearization of an observable
invariant under full non-linear gauge transformations. We expect the
same behavior at any order of perturbation theory; the truncated
expansion of the observable should be invariant under perturbative gauge
transformations truncated at the same order, which is sufficient for the
purposes of perturbative quantization.

It is well-known that gravitational theories do not admit any
non-trivial local observables that are also gauge-invariant. Hence, it
is a significant advantage of the new definition that the class of
generalized local observables in gravitational theories does admit a
large number of observables that are gauge invariant
(Section~\ref{sec:diffinv}). We have given a typical example of one such
observable, motivated by an old proposal of Komar and
Bergmann~\cite{bg0,bg1}. In fact, such gauge invariant observables are
sufficient to separate the gauge orbits on a large open subset of the
phase space (Theorems~\ref{thm:generic} and~\ref{thm:sep}). The main
technical tool in the construction of these gauge invariant observables
is the theory of differential invariants, which in the literature on
GR are also known as curvature scalars or curvature invariants.

Unfortunately, the large open subset of the phase space mentioned above
specifically excludes solutions that have a high degree of symmetry.
Some of these symmetric solutions can be of great physical importance,
at least in GR, with examples like Minkowski or
Schwarzschild or de~Sitter spacetimes. The reason for the exclusion is
that observables based on curvature scalars are incapable of separating
certain inequivalent gauge equivalence classes of solutions. At the
geometric level, the same phenomenon manifests itself in the fact that
the moduli space of Lorentzian metrics (the quotient of jets of
Lorentzian metrics by the action of diffeomorphisms) is not
Hausdorff~\cite{gns}. A well-known example is that all curvature scalars
vanish both on flat Minkowski spacetime as well on non-flat null pp-wave
spacetimes (non-linear wave gravitational wave
solutions)~\cite{hc,chp,chp2}. This is problematic if one would like to
connect perturbative theory about Minkowski space with non-linear local
observables of the kind discussed above. In principle, it is known that
there exist non-scalar differential invariants that are capable of
locally distinguishing non-isometric Lorentzian metrics (cf.\ the
Cartan-Karlhede algorithm discussed in~\cite[Ch.9]{stephani-sols} and
references therein). At this point it remains an open problem to be
investigated whether these more refined differential invariants could be
used to construct local (or perhaps multilocal) observables that are
capable of separating all gauge orbits on the phase space of GR and
other gravitational theories.

In Section~\ref{sec:pois}, we showed that generalized local and gauge
invariant observables have gauge invariant Poisson brackets using the
Peierls formula. However, Poisson brackets of local observables are in
general no longer local. At best they could be described as multilocal
with distributional smearings. Such observables have been previously
discussed in the literature~\cite{bf-notes,bfr}, with careful attention
paid to the class of distributions that can be consistently allowed to
construct an algebra of multilocal observables closed under Poisson
brackets. The added complication in gravitational theories, as is
evident from the Peierls formula, is that in order to preserve gauge
invariance we must allow distributional smearings themselves to depend
on the metric and possibly other dynamical fields. Thus, another
important avenue for investigation is the generalization of multilocal
observables to allow for field-dependent distributional smearings.

It might be argued that the local and gauge invariant observables that
we have introduced in this note are of a \emph{relational} kind
(see~\cite{tamb} and references therein). However, they do not
automatically come with a phenomenological interpretation. That is,
given a particular observable of this kind, it may not be immediately
clear what kind of experimental protocol would be modeled by it (this
issue is discussed clearly in~\cite{kh-time1}). On the other hand, there
is some existing literature that has considered relational observables
in linearized and perturbative gravity with more clear phenomenological
interpretations, but ran into UV divergences in explicit
computations~\cite{ford-lightcone, ford-top, ford-focus, ford-angle,
woodard-thesis, tsamis-woodard, ohlmeyer, kh-time1, kh-time2}. Perhaps
replacing the overly singular proposed observables in these references
with regularized versions written as local and gauge invariant
observables would yield a double benefit: provide certain local
observables with phenomenological interpretations, diffuse UV
singularities in explicit computations. As a further step, it would be
most interesting to identify local gauge invariant observables that
would model some aspects of the data collected by cosmological
observations, such as the Cosmic Microwave Background temperature
fluctuations and its polarization.

It should also be mentioned that another attempt~\cite{gmh} to write
down relational observables (though without clear phenomenological
interpretations) using curvature scalars ran into IR divergences in
explicit computations. On the other hand, our local observables are
designed to be IR regularizing and might give better results in similar
computations.

\ack
The author would like to thank Valentin Lychagin for helpful discussions
on the topic of differential invariants. Thanks also to Jochen Zahn for
comments on an earlier version of the manuscript. The kind hospitality
of the Department of Mathematics at the University of York, where part
of the manuscript was completed, is also acknowledged.


\section*{References}
\bibliographystyle{utphys-alpha}
\bibliography{paper-grobsv}

\end{document}